\documentclass[aps,prl,twocolumn,showpacs,superscriptaddress,groupedaddress]{revtex4}  
\usepackage{graphicx}  
\usepackage{dcolumn}   
\usepackage{pictex}

\usepackage{bm}        
\usepackage{amsfonts,amssymb,amsthm,amsmath}   
\usepackage{threeparttable}

\usepackage[caption=false]{subfig}
\usepackage{verbatim}
\newtheorem{theorem}{Theorem}

\usepackage[usenames,dvipsnames,svgnames]{xcolor}

\def\2;{\;\;}

\def\Ref#1{(\ref{#1})}

\def\C#1{{\mathcal #1}}

\begin{document}

\widetext
\leftline{Version 1.0 as of \today}

\title[Thermodynamics of Confined Knotted Lattice Polygons]{
Thermodynamics of Confined Knotted Lattice Polygons}
\author{EJ Janse van Rensburg}
\affiliation{Department of Mathematics and Statistics, 
York University, Toronto, Ontario M3J~1P3, Canada 
(Email: rensburg@yorku.ca)\\}
\author{E Orlandini}
\affiliation{Dipartimento di Fisica e Astronomia ``Galileo Galilei'', 
Universit\'a degli studi di Padova, 8 - 35131 Padova, Italia
(Email: orlandini@pd.infn.it)\\}
\author{MC Tesi}
\affiliation{Dipartimento di Matematica, Universit\`a di Bologna,
Piazza di Porta San Donato 5, I-40126 Bologna, Italy
(Email: mariacarla.tesi@unibo.it)\\}

\date{\today}

\begin{abstract}
A ring polymer in a confining space may exhibit at least two phases,
namely an expanded (or solvent-rich phase) if its concentration is small,
or a collapsed (or polymer-rich phase) when it is concentrated and compressed.
These phases are discussed in reference \cite{deG79}, and have been
modelled, traditionally, in the mean field using Flory-Huggins 
theory \cite{Flory42,Huggins42}.  In three 
dimensions the ring polymer may also be knotted, or linked, and have
its conformational degrees of freedom constrained by its topology.  In a lattice
model of confined knotted ring polymers there are indications that
the thermodynamic properties of the ring polymer (for example, 
the osmotic pressure \cite{GJvR18,JvR19}) is a function of its topology.  In this paper
we explore a lattice knot model of a confined ring polymer as a function of its
chemical potential.  We show that a well-defined phase transition occurs 
between solvent-rich and polymer-rich phases when the lattice knot exhibits 
either the unknot topology or any other fixed knot type. Furthermore, 
we observe small yet significant variations in the free energy near the 
critical point when comparing trefoil knots with other non-trivial knot types. 
These findings indicate that the thermodynamic properties of confined 
ring polymers depend on their topological entanglement 
characteristics (namely, their knot type).
\end{abstract}

\pacs{82.35.Lr,\,82.35.Gh,\,61.25.Hq}
\maketitle


\section{Introduction}

It is well established in polymer physics and biophysics that the occurrence 
and dimension of knots in circular fluctuating chains are severely affected 
by their lengths and physical properties,  environmental conditions, and 
externally imposed constraints \cite{OW07,MMO07}. 
For instance, the probability of knotting is enhanced if the rings either 
undergo a collapse ($\theta$-point) transition 
\cite{TJvROSW94,OSV04} or 
are isotropically confined \cite{MMOS06,MMOS08,MLY11,MOSSTM09,TOM11a,JOT25}. 
In contrast, a decrease in knotting probability is observed when the ring 
is confined in channels and slabs or stretched by a force 
\cite{TJOW94,JOTW07,JOTW07b,MO12,MO12a}. 
A peculiar behaviour is also observed in  semiflexible polymers, 
where the knotting probability displays a non-monotonic dependence on 
the bending rigidity \cite{COM17}.

Entanglements in polymer strands can be characterized by the degree of
knotting and linking of segments along the polymer strands.  Locating the
``knotted portion'' of a polymer is well defined for small tight knots, 
but is difficult when knots are strongly geometrically entangled, as in 
very long or highly compact chains.  Direct measurements of the size of
a knotted segment, based on various closure schemes, have been proposed 
in recent years \cite{MOSZ05,MDS05,TOM11}. Using 
these approaches, it has been shown that, in good solvents, knotted segments 
along polymers are  weakly localized (that is, they grow sublinearly with 
the chain length).  In a compressed or globule phase the segments tend to
be delocalized, spreading throughout the entire chain \cite{TOM11a}.

An indirect measure of the size of knotted segments along a ring polymer
is based on the scaling behaviour of the configurational entropy of knotted 
rings \cite{OTJW96}. This approach has yielded indications 
that fit well with the findings based on the direct methods, especially in 
good solvent and unconstrained conditions 
\cite{OTJW96,OTJW98,BOS10} 
and partially in the collapsed phase \cite{MOSZ07,BO14}.

In the presence of self and mutual entanglements, polymers are known to 
respond in a non-trivial way to thermal fluctuations and external perturbations 
\cite{SW93,V95}. This topological dependence was examined for ring 
polymers, for example, in reference \cite{GFR96} using mean field 
Flory arguments. Vanderzande discussed the thermodynamics of 
adsorbing ring polymers of non-trivial fixed knot type in reference 
\cite{V95}.  Similarly, the topological effects in ring polymers were 
numerically examined in reference \cite{MWC96} in a model of a melt of 
non-catenated ring polymers; see also reference \cite{HKG13}.

Models of ring polymers were introduced more than 70 years ago 
\cite{ZS49}, and a sound mathematical basis for lattice polygon models 
of ring polymer free energy was introduced in references \cite{H61A,MS93}.  
In particular, the existence of a connective constant made possible proofs 
of the existence of thermodynamic limits in these models, including, 
in many cases, lattice models of interacting or self-interacting ring 
polymers, see, for example, reference \cite{JvR15}.  More generally, 
various lattice and other models of knotting and entanglements in 
ring polymers and biopolymers have been studied 
\cite{D62,deG84,KM91,RCV93,SW93,GFR96,BO12}, particularly in 
the cubic lattice \cite{MW84,JvRW91,KM91,JvR02}.  These include
models of polymers in confining spaces \cite{MLY11,GJR12}, or 
adsorbed polymers \cite{V95}, or a polymer in the collapsed or 
dense phase~\cite{TJvROSW94,OSV04} (modelling a single ring in 
a ring polymer melt).  In references \cite{OJTW94,JOTW22} lattice 
models of the catenation of two ring polymers are examined, also 
as the $\theta$-point is crossed where there is an increase in linking probability.

An important yet not fully explored facet of this research area concerns the 
thermodynamics of knotted ring polymers confined within cavities. Previous 
studies have shown that thermodynamic properties such as pressure 
\cite{JvR07} and osmotic pressure \cite{JvR19,G20,JvR20}—exhibit non-trivial 
dependencies on knotting. Similar results have been observed in lattice 
models of compressed linear polymers \cite{GJvR18}.
If the length of a ring polymer increases in a cavity of fixed size it undergoes
a collapse transition from a solvent rich phase to a dense or polymer rich phase
\cite{deG79}.  The degree of entanglement of a polymer increases through the
transition to the dense phase, including a dramatic increase in the probability 
of knotting.  This has been partially verified for an off-lattice model of ring
polymers in spheres \cite{MMOS06,MMOS08,MLY11,MOSSTM09,TOM11a,CDSSU26}, 
and in space-filling melts of self-assembled polygons 
\cite{SM25}.  However, in these regimes some questions 
are still open: (i) If the knot type of a ring polymer in a cavity is fixed, then
how does the thermodynamics of the solvent-polymer phase transition 
depend on knot type? (ii) Is this transition accompanied by a delocalisation 
transition (or melting) of the knotted segment in the ring polymer? 

In this paper we partially address these questions by performing a 
systematic, extensive Monte Carlo simulation of lattice knots confined
in a cubic cavity in the grand canonical ensemble.  That is, knotted
lattice polygons in a cube with their length varying according to a 
chemical potential.

\section{The model}

A \textit{lattice polygon} is an embedding of a closed curve in the lattice
realised by a sequence of $n$ consecutive unit-length steps and $n$ 
distinct vertices with the first and last vertices being the same.  A lattice
polygon is \textit{confined} if it is placed within a cube of side-length 
$L-1$ and volume $V=L^3$ lattice sites.  If it is knotted with a fixed 
knot type $K$, then it is a  \textit{lattice knot}.  An example of a confined
lattice knot is shown in figure \ref{3}.   A \textit{lattice unknot} is a 
lattice knot of knot type the unknot, denoted by $K=0_1$.   The simplest
non-trivial lattice knot has knot type denoted by $3_1$, and it is a 
lattice trefoil.  The trefoil knot type is chiral, meaning that it is not equivalent
to its mirror reflection.  The right-handed version is denoted by $3_1^+$,
and the left-handed version by $3_1^-$.  Knot types are classified by
their minimal crossing number (see, for example, reference \cite{R03}), and in
this paper we shall consider lattice knots of knot types up to $6$
crossing knots, including the compound knot types $3_1^\pm \# 3_1^\pm$
obtained by the direct sum of trefoil knots.  For details, see reference
\cite{R03}.  In figure \ref{3} a confined lattice knot of knot type 
$5_1^+$ is shown (it is a chiral knot type, similar to the trefoil).

\begin{figure}[h!]
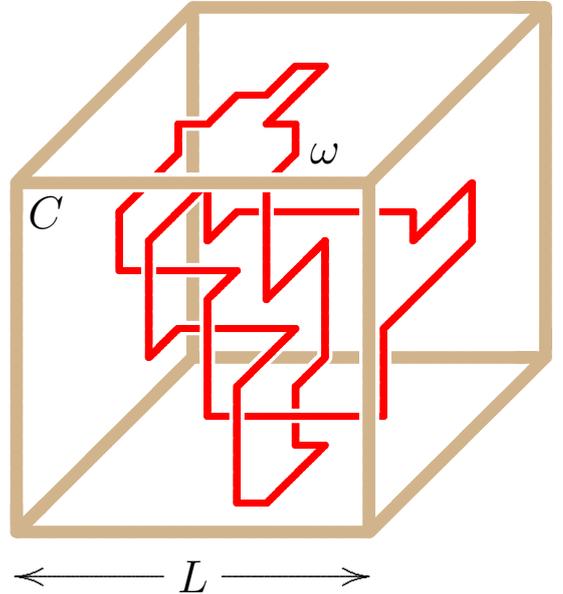

\begin{center}
\hrule\vspace{3mm}

\beginpicture
\setcoordinatesystem units <1.1pt,1.1pt>
\setplotarea x from -20 to 190, y from 0 to 190

\arrow <10pt> [.2,.67] from 70 -15 to 120 -15
\put {{\LARGE $L$}} at 60 -15 
\arrow <10pt> [.2,.67] from 50 -15 to 0 -15
\put {{\LARGE $\omega$}} at 105 130
\put {{\LARGE $C$}} at 10 110 

\color{Tan}
\setplotsymbol ({\scalebox{1.25}{$\bullet$}})
\plot 60 60 180 60 180 180 60 180 60 60 /
\plot 60 60 0 0 0 120 60 180 /

\color{white}
\setplotsymbol ({\scalebox{1.50}{$\bullet$}})
\plot 45 100 65 120 65 100 75 110 135 110 135 100 155 120 155 100
125 70 125 40 65 40 65 80 /
\color{red} 
\setplotsymbol ({\scalebox{0.75}{$\bullet$}}) 
\plot 45 100 65 120 65 100 75 110 135 110 135 100 155 120 155 100
125 70 125 40 65 40 65 80 /

\color{white}
\setplotsymbol ({\scalebox{1.50}{$\bullet$}})
\plot 105 100 105 60 95 50 95 30 105 30 85 10 75 10 75 50 95 70 55 70 45 60 45 100 /
\color{red} 
\setplotsymbol ({\scalebox{0.75}{$\bullet$}})
\plot 105 100 105 60 95 50 95 30 105 30 85 10 75 10 75 50 95 70 55 70 45 60 45 100 /

\color{white}
\setplotsymbol ({\scalebox{1.50}{$\bullet$}})
\plot 65 80 75 90 35 90 35 110  55 130 55 140 65 140 75 150 85 150 95 160 105 160
        95 150 85 140 95 140 95 130 85 120 85 80 105 100 /
\color{red} 
\setplotsymbol ({\scalebox{0.75}{$\bullet$}}) 
\plot 65 80 75 90 35 90 35 110  55 130 55 140 65 140 75 150 85 150 95 160 105 160
        95 150 85 140 95 140 95 130 85 120 85 80 105 100 /

\color{white}
\setplotsymbol ({\scalebox{1.50}{$\bullet$}})
\plot 45 60 45 100 65 120 /
\plot 65 40 65 80  /
\plot 90 40 100 40 /

\color{red} 
\setplotsymbol ({\scalebox{0.75}{$\bullet$}}) 
\plot 55 70 45 60 45 100 65 120 65 100  /
\plot 70 40 65 40 65 80 75 90 /
\plot 85 40 105 40 /
\plot 105 95 105 100 100 95 /

\color{white}
\setplotsymbol ({\scalebox{2.0}{$\bullet$}})
\plot 0 0 120 0 180 60 /
\plot 120 0 120 120 /
\plot 0 120 120 120 180 180 /
\color{Tan}
\setplotsymbol ({\scalebox{1.25}{$\bullet$}})
\plot 10 10 0 0 0 10 /  \plot 170 60 180 60 180 70 /
\plot 0 110 0 120 10 130 / \plot 180 170 180 180 170 180 /
\plot 0 0 120 0 180 60 /
\plot 120 0 120 120 /  \plot 0 120 120 120 180 180 /

\setlinear
\color{black}
\normalcolor
\endpicture

\end{center}
\caption{\textit{A lattice knot $\omega$ of knot type $5_1^+$ inside a cube in the
cubic lattice of side-length $L$ sites and volume $L^3$ lattice sites.  If $\omega$
has length $n$ (steps or lattice sites), then the concentration of the vertices 
(monomers) in the lattice knot is $\phi=n/L^3$.}
}
\label{3}  
\end{figure}

As anticipated in the introduction, the location of the knotted portion 
in a ring polymer confined within a cavity is not well-defined.  This is 
illustrated in figures~\ref{1} and~\ref{2}.  The panels in figure \ref{1} 
are schematic diagrams of a knotted ring polymer in a cubical cavity
at low concentration.  In the left panel the knot is confined to a small volume
in the cavity.   In this case the knot can be quarantined inside a small imaginary 
topological ball, forming a \textit{knotted ball-pair} \cite{deG84,SW88} with a
well-defined knot type.  The rest of the ring polymer outside the ball is also
knotted ball pair, and has knot type the unknot.  This construction gives
well-defined knot types to \textit{knotted arcs} in a ring polymer.  If the 
knotted ball pairs are, on average, statistically small, then the knot is said to be
``small'' or ``localized''.  This has been examined numerically in reference
\cite{OTJW98}, showing that the knotted ball-pairs or arcs in unconstrained lattice
polygons, sampled uniformly, are localized (see also reference 
\cite{RDKM08,RMSS13} for additional results).

The right panel in Figure~\ref{1} is a schematic of a knotted ring polymer
with a ``delocalized'' knot filling the entire cubical cavity.  In this case a 
small ball cannot be found to contain an entangled arc in the ring polymer 
with a non-trivial knot type.  While the knot is generally localized as 
in the left panel, delocalized knotted conformations may dominate 
if the ring polymer is more rigid, or when its persistence length is large, 
compared to the size of the confining cavity.

\begin{figure}[h!]
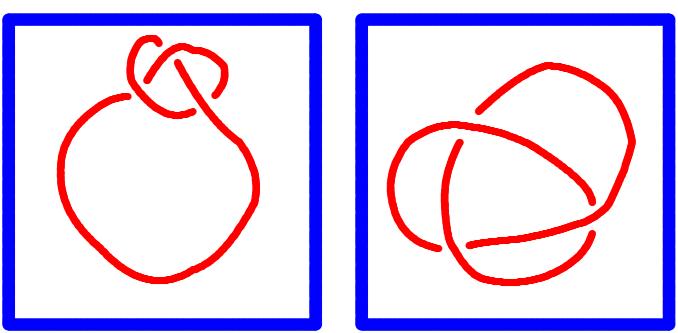

\begin{center}
\hrule\vspace{3mm}

\beginpicture
\setcoordinatesystem units <1.15pt,1.15pt>
\setplotarea x from 0 to 100, y from 0 to 100
\color{blue}
\setplotsymbol ({\scalebox{1.25}{$\bullet$}})
\plot 0 0 100 0 100 100 0 100 0 0 /
\color{red} \setquadratic
\setplotsymbol ({\scalebox{0.75}{$\bullet$}})
\plot 45 80 53 90 60 90 65 89 70 85 70 80 67 75 /
\plot 60 70 50 70 40 80 40 87 43 93 47 94 49 92 /
\plot 55 86 65 70 75 60 80 50 80 40 70 25 60 18 45 15 30 25
20 37 17 50 20 62 30 72 34 74 39 75   /

\setlinear
\color{black}
\normalcolor

\setcoordinatesystem units <1.15pt,1.15pt> point at -115 0 
\setplotarea x from -5 to 100, y from 0 to 100
\color{blue}
\setplotsymbol ({\scalebox{1.25}{$\bullet$}})
\plot 0 0 100 0 100 100 0 100 0 0 /
\color{red} \setquadratic
\setplotsymbol ({\scalebox{0.75}{$\bullet$}})
\plot 25 25 15 30 10 40 10 50 15 60 25 65 35 65 45 63 55 59 70 48 75 40  /
\plot 75 30 70 22 60 16 50 14 40 15 35 18 30 25 28 30 27 40 28 50 32 60 /
\plot 38 70 50 80 60 85 70 83 80 77 85 70 88 60 85 50 80 39 75 35 70 33
60 30 50 28 40 27 35 26 /
\setlinear
\color{black}
\normalcolor
\endpicture

\end{center}
\caption{\textit{Schematic diagrams of a knotted ring polymer in a cubical 
cavity at low concentration.  The conformation on the left shows a knot 
confined in a small volume inside the cavity, while the location of the 
knot on the right fills the volume of the cavity.}
}
\label{1}  
\end{figure}

\begin{figure}[h!]
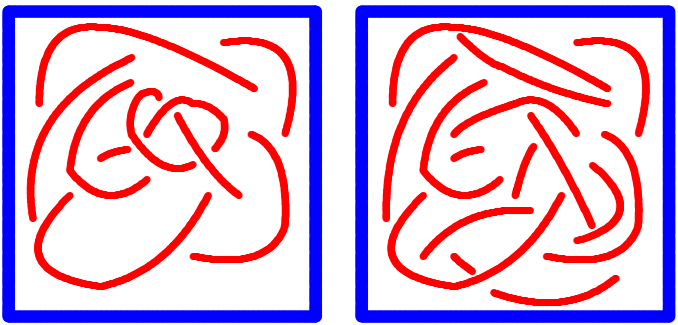

\begin{center}
\hrule\vspace{3mm}

\beginpicture
\setcoordinatesystem units <1.15pt,1.15pt>
\setplotarea x from 0 to 100, y from 0 to 100
\color{blue}
\setplotsymbol ({\scalebox{1.25}{$\bullet$}})
\plot 0 0 100 0 100 100 0 100 0 0 /
\color{red} \setquadratic
\setplotsymbol ({\scalebox{0.75}{$\bullet$}})
\plot 45 60 53 70 60 70 65 69 70 65 70 60 67 55 /
\plot 60 50 50 50 40 60 40 67 43 73 47 74 49 72 /
\plot 55 66 65 50 75 40 /  \plot 30 52 34 54 39 55 /
\plot 45 45 32 40 20 48 26 66 40 77 /
\plot 20 40 10 20 30 10 50 20 65 40 /
\plot 10 70 15 90 30 95 50 90 80 75 /
\plot 70 90 90 85 90 60 /
\plot 79 60 88 50 90 30 80 20 60 20 /
\plot 40 85 13 63 8 32 /

\setlinear
\color{black}
\normalcolor

\setcoordinatesystem units <1.15pt,1.15pt> point at -115 0 
\setplotarea x from -5 to 100, y from 0 to 100
\color{blue}
\setplotsymbol ({\scalebox{1.25}{$\bullet$}})
\plot 0 0 100 0 100 100 0 100 0 0 /
\color{red} \setquadratic
\setplotsymbol ({\scalebox{0.75}{$\bullet$}})
\plot 55 66 65 50 75 30 /  \plot 30 52 34 54 39 55 /
\plot 45 45 32 40 20 48 26 66 40 77 /
\plot 20 40 10 20 30 10 50 20 65 40 /
\plot 10 70 15 90 30 95 50 90 80 75 /
\plot 70 90 90 85 90 60 /
\plot 79 60 88 50 90 30 80 20 60 20 /
\plot 30 85 13 63 8 32 /
\plot 30 60 37 65 50 70 60 70 70 60 / 
\plot 20 20 35 32 55 35 /
\plot 30 20 32 18 36 15 / \plot 43 8 65 5 83 13 /
\plot 50 40 53 50 56 56 /
\plot 32 92 40 85 46 82 62 75 80 70  /
\plot 70 25 84 35 75 50 /

\setlinear
\color{black}
\normalcolor
\endpicture

\end{center}
\caption{\textit{Schematic diagrams of a knotted ring polymer in a cubical 
cavity at intermediate and high concentrations.  The knot is still visible in
the intermediate concentration on the left as a tight tangle towards
the centre, but if it does not remain tight with increasing concentration,
then it relaxes to fill the entire cavity in the high concentration 
regime on the right.}
}
\label{2}  
\end{figure}

The localization of a knotted arc in a confined ring polymer becomes
more intractable at medium and high concentrations when the
lattice polygon approaches a Hamiltonian state and fills the entire
cubical cavity.  This is illustrated schematically in Figure \ref{2}.
A knotted arc can still be seen in the left panel, but in the
high concentration phase, it is melted into a sea of tangles and 
cannot be easily located.  It is still an open question whether
the knotted arc becomes statistically delocalized with increasing 
concentration \cite{JOT25}.

\section{The free energy}

\subsection{Lattice polygon}

The number of lattice polygons of length $n$ (counted modulo translations),
denoted $p_n$, has long been studied as a model of ring polymer 
entropy \cite{H61A,deG79}.  It is known that
\begin{equation}
\lim_{n\to\infty} \frac{1}{n} \log p_n = \log \mu >0
\label{e1}
\end{equation}
exists, and this defines the \textit{growth constant} $\mu$ of lattice polygons.
The lattice polygon growth constant is equal to the self-avoiding walk 
growth constant \cite{HW62A}.  In the cubic lattice it has been calculated 
to high accuracy using sophisticated implementations of the pivot algorithm 
by Clisby \cite{C13} ($\mu=4.684039931(27)$).

Theoretical arguments \cite{LGZJ77,W81,GZJ98} and a tremendous amount of 
numerical data \cite{CS09} suggest that 
\begin{equation}
p_n \sim C \, n^{\alpha-3} \, \mu^n .
\label{e2}
\end{equation}
This shows dependence of $p_n$ on the critical exponent $\alpha$, which
in the underlying $O(N)$ model is the \textit{specific heat} exponent
\cite{W81}.  In three dimensions $\alpha = 0.237 \pm 0.002$ \cite{LMS95,GZJ98}.

The grand canonical partition function of lattice polygons is defined by
\begin{equation}
Z(x) = \sum_{n\geq 0} p_n\, x^n .
= \sum_{n\geq 0} p_n \, e^{-\upsilon n /k_BT},
\label{e3}
\end{equation}
where $T$ is the absolute temperature, $k_B$ 
the Boltzmann's constant and  $\upsilon$ the chemical potential.
The corresponding limiting grand canonical free energy 
(or \textit{grand potential})  is defined by
\begin{equation}
\Phi(x) = \log Z(x) .
\label{e4}
\end{equation}
By equations \Ref{e1} and \Ref{e2} one observes that
\begin{equation}
\Phi(x) \sim | x - x_c |^{2-\alpha},
\label{e5}
\end{equation}
where $x_c = 1/\mu$ is a critical point and radius of convergence of $Z(x)$ in
equation \Ref{e3}, separating phases dominated by finitely long, and 
by infinitely long, lattice polygons.

The decay of the $2$-point function of the self-avoiding walk is controlled
by its correlation length $\xi$ \cite{LMS95}, whose scaling is controlled 
by the \textit{metric exponent} $\nu$ in the finite phase, that is, when $x<x_c$:
\begin{equation}
\xi \sim |x - x_c|^{-\nu} .
\label{e6}
\end{equation}
All the metric quantities of polygons of length $n$, such as the root mean 
square radius of gyration $R_g$, or the mean span, of polygons of length 
$n$, have the large $n$ behaviour ruled by $\nu$, namely
\begin{equation}
R_g \sim n^\nu.
\label{e7}
\end{equation}  
In the cubic lattice $\nu = 0.587597 (7)$ \cite{C10} and if we assume 
that the hyperscaling relation $2-\alpha=d\nu$ holds, we get the 
independent estimate $\alpha=0.237209(21)$. The Flory value
of $\nu=3/5$ \cite{F69}, and this gives $\alpha=1/5$.

\subsection{Lattice knots}

The number of lattice knots of length $n$ and knot type $K$ is
denoted $p_n(K)$.  If $K$ is the unknot $0_1$, then it is known
that the limit
\begin{equation}
\lim_{n\to\infty} \frac{1}{n} \log p_n (0_1) = \log \mu_{0_1}  >0
\label{e8}
\end{equation}
exists \cite{SW88,JvRW90,WJvR93}. If $K$ is not the unknot (but is a non-trivial knot), then
\begin{equation}
\limsup_{n\to\infty} \frac{1}{n} \log p_n (K) = \log \mu_K > 0 . 
\label{e9}
\end{equation}
It is known that $\mu > \mu_K \geq \mu_{0_1} > 0$ \cite{SW88,P89}. 
Numerical data on these growth constants cannot, so far, rule out either 
$\mu_K = \mu_{0_1}$, or $\mu_K > \mu_{0_1}$, for arbitrary 
$K\not=0_1$.  For the purposes of this paper we conjecture that 
$\mu_K = \mu_{0_1}$.  Numerical results in reference \cite{JvRW91}
show that $|\mu - \mu_{0_1}| \approx 4.15\times 10^{-6}$, so that
$\mu_{0_1}$ and $\mu$ agree to (perhaps) $4$ decimal places.

The asymptotic scaling of $p_n$ in equation \Ref{e2} suggest that
\begin{equation}
p_n(K) \sim C_K \, n^{\alpha_K-3} \, \mu_K^n .
\label{e10}
\end{equation}
If $K$ is the unknot $0_1$, then numerical simulations suggest that
$\alpha_{0_1} = \alpha$ \cite{OTJW96,OTJW98,BO12}, in particular since 
the partition function $Z(x)$ in equation \Ref{e3}, for small values of $x$ 
are dominated by unknotted polygons \cite{JvRW91}.  However, this 
remains unproven.  

Numerical data in references \cite{OTJW96,OTJW98} suggest that
\begin{equation}
\alpha_K = \alpha_{0_1} + N_K 
\label{e11}
\end{equation}
where $N_K$ is the number of prime components in the knot type $K$
(see also references \cite{JvR09,JvR19a}).
These observations show that, asymptotically, in a lattice knot of knot 
type $K$, the prime components of $K$ are accommodated as local and
independent knotted arcs, each similar to the schematic in the left panel
in figure \ref{1}. 

Metric properties of lattice knots exhibit scaling similar to that of
lattice polygons in equation \Ref{e7}. Denote the metric exponent
associated with lattice knots of type $K$ by $\nu_K$. Then the root 
mean square radius of gyration of a lattice knot has a scaling
\begin{equation}
R_g (K) \sim n^{\nu_K} .
\label{e12}
\end{equation}
Numerical simulations support the conjecture that $\nu_K
= \nu_{0_1} = \nu$ \cite{JvRW91A}.

\subsection{Confined lattice knots}

An \textit{$L$-cube} in the cubic lattice is a cubical sublattice of $L^3$ 
lattice sites and side-length $L$ (lattice sites).  We place the origin of 
the cubic lattice at the \textit{bottom site} of the $L$-cube (this is the 
lexicograph least site in the cube).

Realisations of a lattice knot inside an $L$-cube has several degrees of
freedom, notably its length $n$, conformational degrees of freedom, 
and then also translational and rotational degrees of freedom.  Suppose
that there are $p_{n,L}(K)$ realisations of a lattice knot of fixed knot 
type $K$, length $n$, in an $L$-cube.  If these conformations are uniformaly
weighted, then the free energy of the lattice knot is $F_{n,L} (K) 
= - \log p_{n,L} (K)$ (this defines the canonical ensemble of this model).  
In the grand canonical ensemble we introduce a chemical potential 
$\upsilon>0$ and define the grand canonical partition function
\begin{equation}
Z_{K,L} (x) 
= \sum_{n\geq 0} p_{n,L}(K) \, x^n
\label{e13}
\end{equation}
where $x = e^{-\upsilon n /k_BT}$. We also use the convention that 
$p_{0,L}(K)=1$ for all $L$ and $K$, so that $Z_{K,L}(x) = 1 + \ldots$.  
Then the grand canonical free energy density of the model is given by
\begin{align}
f_{K,L}(x) &= \frac{1}{L^3} \log \sum_{n\geq 0} p_{n,L}(K) \, x^n .
\label{e14}
\end{align}
For finite values of $L$, $n \leq L^3$, so that $Z_{K,L}(x)$ is a 
polynomial in $x$ with non-negative coefficients and constant term
equal to $1$ (so that $f_{K,L}(0) = 0$).

\subsection{The limiting free energy}

The (canonical) free energy in equation \Ref{e14} has limit given by
\begin{align}
\chi_K(x)  &= \limsup_{L\to\infty} f_{K,L} (x) \cr 
&= \limsup_{L\to\infty} \frac{1}{L^3} \log Z_{K,L} (x) .
\label{e15}
\end{align}
If the limit exists then this defines a thermodynamic limit in the model.

If $K$ is the unknot ${0_1}$, then consider two lattice knots $\alpha$
and $\beta$, both of knot type the unknot.  We sample $\alpha$ uniformly
from an $L$-cube, and $\beta$ similarly from an $M$-cube.

A \textit{bottom edge}, and a \textit{top edge}, of $\alpha$
can be defined by a lexicographic ordering of the edges along $\alpha$
by the coordinates of their midpoints.  Similarly, the bottom and top
edges of $\beta$ can be located.  The polygons $\alpha$ and $\beta$
can be \textit{concatenated} into a single polygon by translating
and rotating $\beta$ such that the midpoint of its bottom edge is
one step in the $x$-direction \textit{above} the midpoint of the
top edge of $\alpha$.  These bottom and top edges are then the opposing
sides of a unit square in the lattice, and by deleting them, and then
adding back in the \textit{other} two edges of the unit square, 
a single lattice knot, $\alpha\circ\beta$, composed of the sites 
along $\alpha$ and $\beta$, is obtained \cite{H61A,HW62A,JvRW90,WJvR93,JvR02}.

If both $\alpha$ and $\beta$ are lattice realisations of the unknots, then
the knot type of $\alpha\circ\beta$ is the unknot.  If $\alpha$ has 
length $k$, then there are at most $p_{k,L}({0_1})$ choices for 
it, and if $\beta$ has length $n-k$, then it can be chosen in at most 
$p_{n-k,L}({0_1})/2$ ways (the division by $2$ accounts for the
rotation of $\beta$ to align the top and bottom edges).  

The concatenated lattice knot $\alpha\circ\beta$ has length $n$, and 
can be placed inside an $(L+M)$-cube in at most $(L+M)^3$ ways. This gives the
inequality
\begin{equation}
\sum_{k=0}^n p_{k,L}({0_1}) \, p_{n-k,M}({0_1})
\leq 2(L+M)^3 p_{n,L+M}({0_1}) .
\label{e16}
\end{equation}

This inequality gives the following theorem.

\begin{theorem}
The limit $\displaystyle \chi_{0_1} (x) = 
\lim_{L\to\infty} \frac{1}{L^3} \log Z_{{0_1},L}(x)$ exists and
defines the limiting free energy of confined lattice unknots.  Moreover,
since $Z_{{0_1},L}(x) \geq 1$, it follows that
$\chi_{0_1} (x) \geq 0$.
\label{thm1}
\end{theorem}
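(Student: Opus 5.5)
Multiplying the inequality \Ref{e16} by $x^n$ and summing over $n\geq 0$ turns the left-hand side into a Cauchy product, so for every $x\geq 0$ and all $L,M\geq 1$
\begin{equation*}
Z_{0_1,L}(x)\,Z_{0_1,M}(x) \leq 2(L+M)^3\, Z_{0_1,L+M}(x).
\end{equation*}
The $n=0$ terms are consistent with the convention $p_{0,L}(0_1)=1$, because $1\leq 2(L+M)^3$. Write $a_L = \log Z_{0_1,L}(x) + \log\bigl(2(2L)^3\bigr)$. This gives an almost-supermultiplicative relation in the side-length. The difficulty is that the normalisation is by the volume $L^3$ and not by $L$, so one-dimensional Fekete additivity does not directly yield a limit of $a_L/L^3$. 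The plan is therefore to replace the concatenation of two cubes along a line by a \emph{block} construction: many small cubes packed into a big cube.

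First, I would generalise the concatenation to $m^3$ lattice unknots, each confined to an $L$-cube, placed in the $m\times m\times m$ array of $(L+1)$-cubes tiling an $m(L+1)$-cube. They would be joined one at a time along a Hamiltonian (snake) path through the array, with the top-edge/bottom-edge square surgery described before \Ref{e16}. Each join of two unknots gives an unknot, since it is a connected sum of unknots with separated supports. The construction loses at most a factor $\mathrm{poly}(L)^{m^3}$ from the alignment and translations, at most $2$ per join. The lost factor only needs to be polynomial in $L$ per block. This would give
\begin{equation*}
Z_{0_1,L}(x)^{m^3} \leq \bigl(c\,L^{k}\bigr)^{m^3} Z_{0_1,m(L+1)+c'}(x)
\end{equation*}
for fixed constants $c,c',k$. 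An upper bound is also needed: $n\leq L^3$ and $p_{n,L}(0_1)\leq L^3\,6^n$, so $f_{0_1,L}(x)\leq \log\max(1,6x)+o(1)$ is bounded. The lower bound $f_{0_1,L}\geq 0$ follows from $Z\geq 1$.

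Second, I would fix $L$ and take $N\to\infty$, writing $N = m(L+1)+c'+r$ with $0\leq r< L+1$. I would also need the monotonicity $Z_{0_1,N}\leq Z_{0_1,N+r}$, which holds because an $N$-cube embeds in an $(N+r)$-cube. Dividing by $N^3$ then gives
\begin{equation*}
\liminf_{N\to\infty} f_{0_1,N}(x) \geq \frac{L^3}{(L+1)^3}\Bigl(f_{0_1,L}(x) - \frac{\log(cL^k)}{L^3}\Bigr).
\end{equation*}
Taking $\limsup_{L\to\infty}$ on the right shows that $\liminf_N f_{0_1,N}\geq \limsup_L f_{0_1,L}$. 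Hence the limit exists, it is finite by the upper bound, and it is $\geq 0$ because every $f_{0_1,L}\geq 0$.

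The main obstacle is the block construction. One needs to ensure that the surgery between neighbouring blocks can always be performed: the top edge of one polygon and the bottom edge of the next must be positioned so that the connecting unit square lies in the one-site gap between blocks and does not hit other polygons. One also needs the whole snake-concatenation to remain an unknot, with only polynomial-in-$L$ loss per block. I would handle the geometry by reserving a one-site gap around each block and rotating/translating each polygon within its block, paying at most a factor $L^3$ for the choices. Alternatively, one could first concatenate along rows using \Ref{e16} directly, then rows into slabs, then slabs into the cube. This iterates the pairwise inequality in each coordinate direction, and the errors $\log(2(L+M)^3)$ are negligible relative to $L^3$.
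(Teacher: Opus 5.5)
Your route differs from the paper's, and your objection to the one-dimensional argument is correct. The paper's proof sums \Ref{e16} against $x^n$ to get $Z_{0_1,L}(x)Z_{0_1,M}(x)\le 2(L+M)^3Z_{0_1,L+M}(x)$ and invokes Hammersley's generalised superadditivity. That controls $\frac{1}{L}\log Z_{0_1,L}(x)$, which it sends to $+\infty$ whenever $\limsup_L f_{0_1,L}(x)>0$. It does not control $\frac{1}{L^3}\log Z_{0_1,L}(x)$. Your skeleton is the right one: a block inequality, monotonicity in $L$, the bounds $0\le f_{0_1,L}(x)\le\log\max(1,6x)+o(1)$, and the conclusion $\liminf_N f_{0_1,N}\ge\limsup_L f_{0_1,L}$. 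The deduction from the block inequality is sound.

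\textbf{The gap.} The block inequality itself is not established, and neither mechanism you offer can establish it.
\begin{itemize}
\item The square surgery needs the two edges to be opposite sides of a unit square. Polygons confined to $L$-cubes separated by a one-site gap are at distance at least $2$.
\item Using the one unit of slack to close that gap requires two things. First, $\omega_i$ must reach both its entry and exit faces, which fails for a polygon of small span. Second, the top edge of $\omega_i$ and the bottom edge of $\omega_{i+1}$ must have equal transverse coordinates, which needs translations of order $L$.
\item The surgery fixes each polygon's position relative to its predecessor, so these shifts accumulate along the snake. Bounded slack cannot absorb them. Slack of order $L$ or more would destroy the ratio $L^3/(\mbox{block side})^3\to 1$ that your final step uses.
\item The row/slab alternative fails for the same reason. \Ref{e16} outputs an $(L+M)$-cube, so iterating it gives only $Z_{0_1,L}^m\lesssim Z_{0_1,mL}$, i.e.\ $f_{0_1,mL}\gtrsim f_{0_1,L}/m^2$. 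Hammersley concatenation does not preserve a fixed cross-section, because aligning the bottom edge of $\beta$ with the top edge of $\alpha$ shifts $\beta$ transversally by up to $L-1$ at each step.
\end{itemize}

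\textbf{What is missing} is a connector of variable length.
\begin{itemize}
\item Keep each $\omega_i$ at its own position in an $L$-cube inside a block of side $L+g$, with $g$ a fixed constant.
\item Leave $\omega_i$ from an edge in its extremal plane $x=x_{\max}(\omega_i)$ along a straight two-strand corridor; this corridor is automatically disjoint from $\omega_i$.
\item Route the corridor through the gaps and enter $\omega_{i+1}$ the same way, through an edge in its plane $x=x_{\min}(\omega_{i+1})$. At both ends, delete the extremal edges and join their endpoints by the two strands.
\end{itemize}
This adds $O(L)$ edges per block. For $x<1$ the cost is a factor $x^{O(L)}$ per block: exponential in $L$, not polynomial as you assert, but still $o(L^3)$ in the exponent, which is all your final estimate needs.

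Three further points need checking.
\begin{itemize}
\item You must route around or discard empty blocks. Discarding is harmless: $f_{0_1,L}\ge 0$ makes the case $\limsup_L f_{0_1,L}=0$ trivial, and otherwise $\log(Z_{0_1,L}-1)=\log Z_{0_1,L}+o(1)$ along a maximising subsequence.
\item The map from $m^3$-tuples to polygons must be shown to be at most $e^{o(L^3)}$-to-one.
\item The result must be shown to be unknotted. Use a connected-sum argument with spheres just outside each block, each meeting the partially built polygon in two points when that block is attached.
\end{itemize}
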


\proof 
Multiply equation \Ref{e7} by $x^n$ and sum over $n$.  This gives
\[ Z_{{0_1},L}(x) \, Z_{{0_1},M}(x) \leq
2(L+M)^3 \, Z_{{0_1},L+M} (x) . \]
That is, $\log Z_{{0_1},L}(x)$ satisfies a generalised superadditive
relation of the kind examined by Hammersley \cite{H62}, proving the
existence of the limit as claimed. 
\qed

\subsection{Scaling of the free energy}

The function $p_{n,L}({0_1})$ increases exponentially with $n$
when $n\ll L$.  That is, according to equation~\Ref{e8}, 
$p_{n,L}({0_1}) = \mu_{0_1}^{n+o(n)}$ if $L$ is large
and $n\ll L$.  This suggests that, if $K={0_1}$, 
the summation in equation \Ref{e13} is convergent in the limit $L\to\infty$
if $x\mu_{0_1} < 1$, while it is divergent if $x\mu_{0_1} > 1$.
When $x\mu_{0_1} < 1$, then one expects $\chi_{0_1} (x) = 0$,
while $\chi_{0_1} (x) > 0$ if $x\mu_{0_1} > 1$.

Thus, for the unknot ${0_1}$, the limiting free energy
$\chi_{0_1} (x)$ has a \textit{critical point} $x_{0_1}
=1/\mu_{0_1}$ (theorem \ref{thm1}).  In other words,
$\chi_{0_1} (x)$ is a non-analytic function, and the standard 
assumption is that its \textit{singular part} has \textit{scaling} behaviour
given by
\begin{equation}
\chi_{0_1} (x) \sim
\begin{cases}
0, & \hbox{if $x< x_{0_1}$}; \\
|x {-} x_{0_1}|^{2-\alpha_{0_1}^{*}} + \hbox{[...]}, 
& \hbox{if $x>x_{0_1}$},
\end{cases}
\label{e17}
\end{equation}
where $\hbox{[...]}$ are terms dominated by the leading singular term, and
$\alpha_{0_1}^{*}$ is the \textit{specific heat} exponent associated 
with the non-analyticity in $\chi_{0_1} (x)$.  The exponent 
$\alpha_{0_1}^{*}$ in equation \Ref{e17} is in particular related
to the (collapse) phase transition of \textit{confined lattice unknots}
and it should not be confused with the entropic exponent $\alpha_{0_1}$
introduced in equations \Ref{e10} and \Ref{e11} that refers  unknotted polygons in free space.  
Nor should it be
confused with the entropic exponent of topologically unconstrained lattice polygons in equation~\Ref{e2}.  While it may be the case that $\alpha_{0_1} = \alpha$, the confinement of lattice knots changes the entropic properties of
the lattice knots and may change the value of the associated set of
critical exponents related to the collapse transition of confined lattice unknots, including the value of $\alpha_{0_1}^*$.

Although theoretically less well understood, similar scaling behaviour is  expected for non-trivial knot types $K$, namely a critical exponent
$\alpha_K^*$ controlling the corresponding non-analytic behaviour of the free energy $\chi_K(x)$ of a confined lattice knot of type $K$.  In this event one expects, similar to equation \Ref{e17}, a critical point $x_K$
in the free energy $\chi_K(x)$, such that
\begin{equation}
\chi_K (x) \sim
\begin{cases}
0, & \hbox{if $x< x_K$}; \\
|x {-} x_K|^{2-\alpha_K^{*}} + \hbox{[...]}. 
& \hbox{if $x>x_K$},
\end{cases}
\label{e18}
\end{equation}
This poses the following additional questions: (1) Is it the case that $x_K=x_{0_1}$?
It can be shown that $x_{0_1} \leq x_K$ \cite{JvR09}.
(2) What is the relationship between the \textit{thermodynamic exponents} 
$\alpha_{0_1}^{*}$ and $\alpha_K^{*}$?

In the scaling limit, the \textit{energy density} $\mathcal{E}_K$ 
(or the density of occupied sites) of the confined lattice knot is 
the derivative of $\chi_K(x)$ to $\log x$.  This gives the scaling
\begin{eqnarray}
\label{e19}
\mathcal{E}_K (x) &=& x\frac{d}{dx}\, \chi_K(x)  \\
&\sim&
\begin{cases}
0, & \hbox{if $x< x_K$}; \nonumber \cr
|x {-} x_K|^{1-\alpha_K^{*}} + \hbox{[...]},
& \hbox{if $x>x_K$}.
\end{cases} 
\end{eqnarray}
Notice that the (limiting) density (or concentration) of polymer in the cube 
is given by $\mathcal{E}_K (x)$.

The second derivatice of $\chi_K(x)$ is the variance or \textit{specific heat} of
the model.  In this case we have
\begin{eqnarray}
\label{e20}
\mathcal{C}_K (x) &=& x\frac{d}{dx}\, \mathcal{E}_K(x) \\
&\sim&
\begin{cases}
0, & \hbox{if $x< x_K$};  \nonumber \cr
|x {-} x_K|^{-\alpha_K^{*}} + \hbox{[...]},
& \hbox{if $x>x_K$}.
\end{cases} 
\end{eqnarray}

\section{Numerical sampling of confined lattice knots}

Confining the lattice knot to an $L$-cube reduces the state space of the
model, resulting in smaller irreducibility classes of conformations.
The number of lattice knots of knot type $K$, length $n$, inside an 
$L$-cube (and so in an irreducibility class), is denoted $p_{n,L}(K)$.
Confined lattice knots in $L$-cubes were sampled as in references
\cite{JvRR11A,JvR19a,JOT25}.  The GAS algorithm \cite{JvRR09,JvRR11A},
implemented with BFACF elementary moves \cite{BF81,ACF83,JvRW91},
was used to sample along tours (or Markov Chains) in the state space 
of lattice knots in an $L$-cube. The algorithm is an \textit{approximate 
enumeration algorithm} (see, for example, reference \cite{JvR09A}).  
Its implementation returns ratio estimates $p_{n,L}(K)/p_{m,L}(K)$ for 
confined lattice knots of lengths $n$ and $m$.  For details of the
implementation to sample lattice knots, see references \cite{JvRR11A,JOT25}.

In addition to newly generated data using the GAS algorithm, we used 
data collected in reference \cite{JOT25}, and where necessary, supplemented 
it with additional simulations using the GARM algorithm \cite{RJvR08,JvRR09} 
to check for consistency and convergence.  The GAS algorithm was 
implemented in parallel to sample aloung sequences, one per CPU, using
omp protocols. Typically, tours were generated along $T$ parallel tours, 
collected in $M$ blocks, each block of length $10^7$ iterations BFACF 
elementary moves per tour.  For example, for the trefoil knot type in a 
cube of side length $15$ a total of $M=343$ blocks were sampled, each 
block consisting of $4$ parallel tours, and each tour of length $10^7$ iterations.  
This gives a total of $2.576 \times 10^{10}$ iterations.  Since the blocks 
are independently sampled, confidence intervals can be calculated by 
treating estimates from each block as independent.

Confined lattice knots were sampled in $L$-cubes for $L\in\{7,9,11,13,15\}$.
In an $L$-cube the maximum length polygon has length $L^3-1$ (if $L$ is
odd).  Thus, in the case of $L=13$, for example, the algorithm sampled
confined lattice knots of lengths up to $n=2196$. Sampling deteriorated
as the maximum length was approached.

The minimal length of a realisation of the unknot in the cubic lattice is 
$4$ (this is the boundary of a plaquette).  Thus, it follows that in the cubic
lattice, $p_4(\emptyset) = 3$.  The minimal length lattice unknot can also
be placed in $3L(L{-}1)^2$ distinct ways in an $L$-cube. That is,
$p_{4,L}(\emptyset) = 3L(L{-}1)^2$.   Since the GAS algorithm estimates
ratios $p_{n,L}(\emptyset)/p_{m,L}(\emptyset)$, one can choose $m=4$ to
obtain estimates of the counts $p_{n,L}(\emptyset)$ of lattice unknots of
length $n$ in an $L$-cube.

The situation is similar for non-trivial knot types.  For example, the
minimum length of a lattice knot of knot type the trefoil is $24$
\cite{D93}, and there are $1664$ distinct placements of the
minimal length right-handed trefoil in the cubic lattice lattice (equivalent under
translations) \cite{ISDAVS12}. One may also explicitly, by computer, count 
the number of ways a minimal length trefoil can be placed in an $L$-cube
\cite{JOT25}.  For example, in a $3$-cube there are $p_{24,3} (3_1^{+})=2084$
placements of the minimal length right-handed lattice trefoil. This can be 
used to normalise the data sampled by the GAS algorithm by 
choosing $m=24$ in the estimates of the ratios
$p_{n,L}(\emptyset)/p_{4,L}(\emptyset)$ when sampling right-handed
lattice trefoils.  For more details, see reference \cite{JOT25}.

Since data were collected in independent blocks, we were able to calculate
standard deviations on our estimates of $p_{n,L}(K)$.  These can be used
to determine statistical error bounds on our estimates of the free energy.  
For example, in the case of the unknot $f_{K,L}(1.0) = 0.7720(14)$ if $L=7$ 
$f_{K,L}(1.0) = 0.85372(19)$ for $L=15$.  Error bounds similar to these 
were determined for other knot types.

\subsection{Finite size scaling}

The scaling relations in equations \Ref{e18}--\Ref{e20} give the scaling
of the free energy and its derivatives in the limit as $L\to\infty$.  Generally
the free energy $\chi_K(x)$ is not known, and can only be approximated
by $f_{K,L}(x)$ for finite values of $L$.  The finite size free energy
$f_{K,L}(x)$ is similarly unknown, but can be approximated by numerical means. 

Taking the derivative of equation \Ref{e14} the finite size energy 
density is obtained:
\begin{eqnarray}
\mathcal{E}_{K,L}(x) 
&=& x\frac{d}{dx} f_{K,L}(x) \label{e21}  \\
&=& \frac{1}{Z_{K,L}(x)} \sum_{n\geq 0} (n/L^3)\,p_{n,L}(K)\,x^n. \nonumber
\end{eqnarray}
Since $n/L^3$ is the \textit{concentration} or density of occupied sites 
in the $L$-cube, $\phi_{K,L}(x)=\mathcal{E}_{K,L}(x)$ 
is the \textit{mean concentration} of 
the lattice knot, giving explicit meaning to the energy density in 
the finite size model.  

In general the free energy $f_{K,L}(x)$ is small for $x<x_K$, and 
it increases with $x$ when $x>x_K$ where the partition
function is dominated by long polygons.  This gives rise to two regimes
in the finite size model, namely a \textit{solvent} or \textit{empty}
phase  when $x<x_K$, and a \textit{polymer} or \textit{dense} phase
when $x>x_K$.  The crossover from the solvent to the polymer phase
occurs in the vicinity of $x_K$, which is the critical point separating the 
two phases in the scaling limit.

In the finite-size system thermodynamic quantities are functions of 
ratios of length scales in the model.  At any given value of $x$, there
are two length scales, namely the side length $L$ of the cube, and
then the correlation length $\xi$ defined in equation \Ref{e6}.
Following the arguments in reference \cite{BO22} one can assume that 
\begin{equation}
\mathcal{E}_{K,L}(x) \sim L^q \, h_0( (x{-}x_K)\,L^{1/\nu}),
\label{e22}
\end{equation}
where $h_0$ is a finite-size scaling function.
This introduces the exponent $q$, which is related to the concentration 
(of monomers or vertices) in the confining cube.  This relation can also be
written as 
\begin{equation}
\mathcal{E}_{K,L}(x) \sim (x{-}x_K)^{-q\nu}  h_1( (x{-}x_K)\,L^{1/\nu}) ,
\label{e23}
\end{equation}
where $h_1(z) = z^{q\nu} h_0(z)$.  The exponent $1/\nu$ controls 
the \textit{crossover} scaling in the $x$--$L$ diagram. Comparison with 
equation \Ref{e19} shows that, if $z<0$, then $h_0(z) \to 0$ as 
$L \to\infty$.  On the other hand, if $z>0$ and small, then 
$h_0(z) \sim  z^{1-\alpha_K^*}$ so that
\begin{equation}
q=(\alpha^*_K-1)/\nu .
\label{e24}
\end{equation}  

\begin{figure}[t!]	
\includegraphics[width=0.5\textwidth]{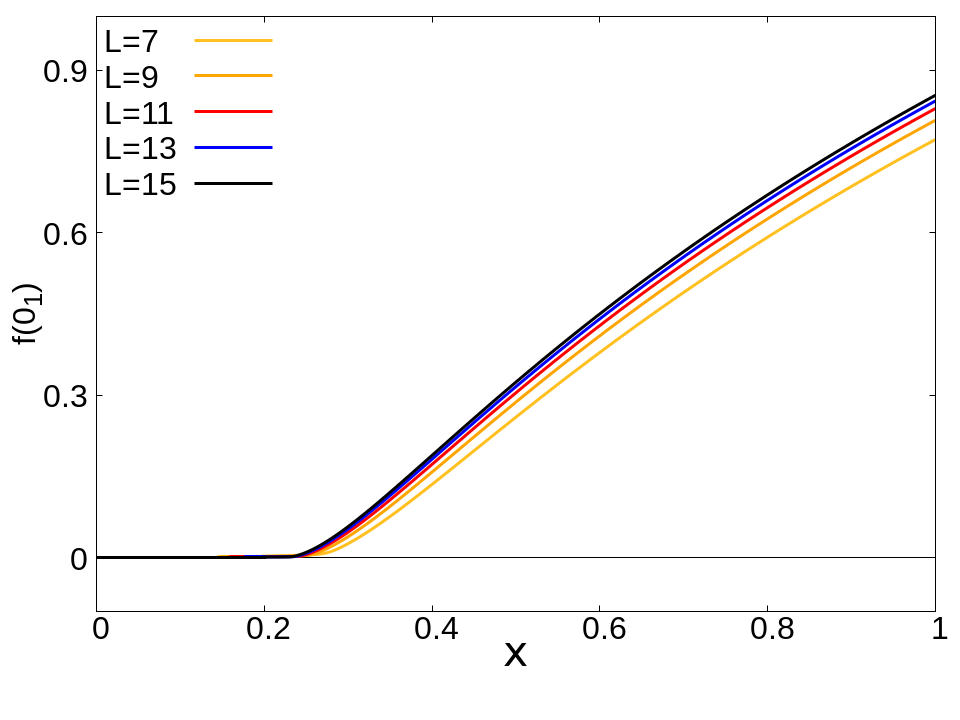}
	\caption{\textit{Finite size free energies $f_{0_1,L}(x)$ (see equation
	\Ref{e14}) of the unknot for $x\in[0,1]$. The values of $L$ increases
	from $L=7$ in steps of $2$ to $L=15$.  The curves accumulate on
	zero when $x$ is small, but diverges once $x$ is larger than a
	a critical point $x_{0_1}$, consistent with equation \Ref{e18}.}
	} 
\label{fFE01}
\end{figure}

\begin{figure}[h!]	
\includegraphics[width=0.5\textwidth]{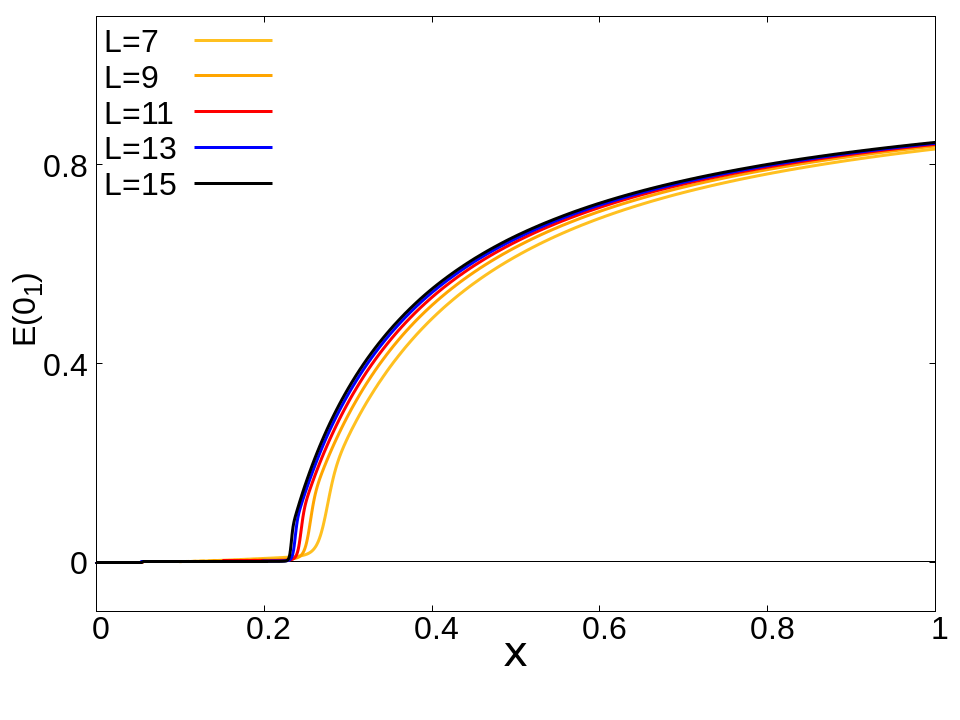}
	\caption{\textit{Finite size energy densities (see equation
	\Ref{e19}) as a function of $x$ for the unknot.  The curves
	accumulate on zero for $x$ small, but diverge sharply when 
	$x$ increases beyond the critical point. With increasing $x$
	$\C{E}_{0_1}$ approaches $1$, for example, if $x=2$,
	then $\C{E}_{0_1} \approx 0.93$ if $L=15$. As noted above
	equation \Ref{e19}, the density of polymer in the confining box 
	is given by $\C{E}_{K,L}(x)$.}
	} 
\label{fE01}
\end{figure}


\subsubsection{The lattice unknot $0_1$}

In figure \ref{fFE01} we plot the free energies $f_{0_1,L}(x)$ (see equation
\Ref{e14}) for confined unknotted polygons as a function $x$. The curves 
accumulate on zero for small values of $x$, but beyond a critical point the 
curves are increasing and appears to accumulate towards a limiting curve 
with increasing $L$.

The corresponding energy density curves $\mathcal{E}_{0_1,L}(x)$ are 
plotted in figure \ref{fE01} (see equation \Ref{e21}).  The sharp change 
at a finite size critical point $x_{0_1} (L)$ in each curve is a sign of a 
critical point $x_{0_1}$ in the thermodynamic limit (when $L=\infty$). 

\begin{figure}[h!]	
\includegraphics[width=0.5\textwidth]{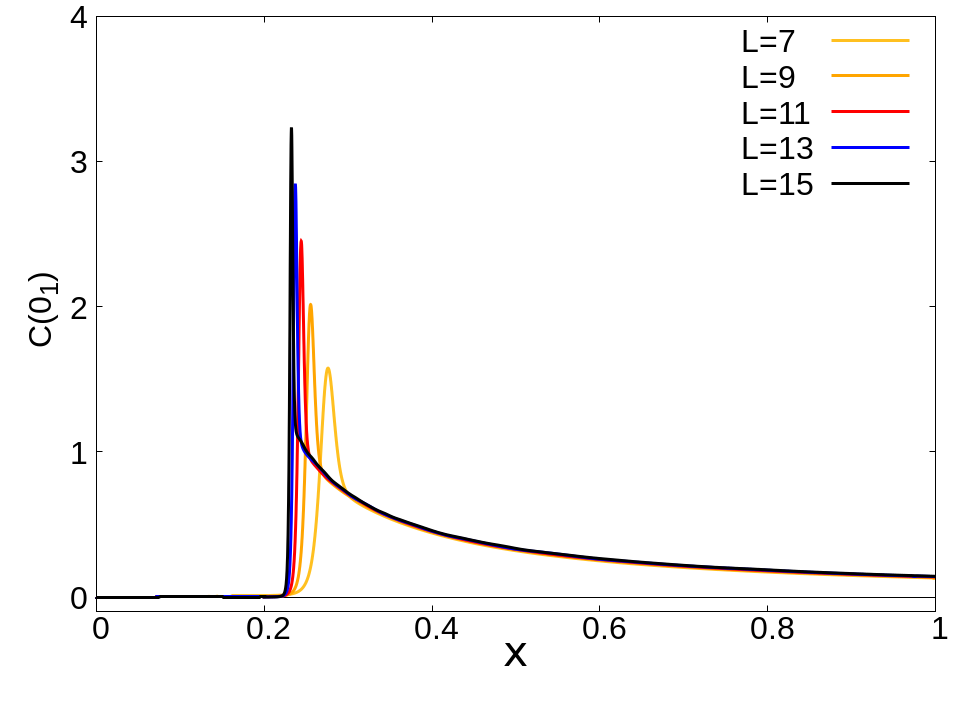}
	\caption{\textit{The finite-size specific heat curves for the unknot 
	(equation \Ref{e20}) exhibit a sharp peak near the critical point 
	for each chain length L. As L increases, these peaks become 
	progressively more pronounced — growing both taller and narrower
	— while the curves themselves appear to converge toward a limiting 
	profile. Notably, on the high-temperature side of each peak, 
	the curves develop an increasingly sharp transition as they 
	approach the maximum.}
	} 
\label{fC01}
\end{figure}

The specific heat curves are calculated from the second derivatives of the
free energy (equation \Ref{e20}), and these are plotted in figure 
\ref{fC01}.  With increasing $L$ these curves show a very sharp spike in 
the vicinity of the critical point.  The location of the spike at a point 
$x_{0_1}(L)$ is a finite size approximation of the thermodynamic critical 
point $x_{0_1}$.  Numerical estimates of $x_{0_1}(L)$ are obtained by locating
the position of the maximum in the specific heat curves (this is also the location
of the maximum height of the spike).  One-half of the width of the peaks at 
half-height were taken as an error bar in each case. The heights of the peaks 
or spikes in the specific heat curves, denoted $H_{0_1}(L)$, were also determined 
(this is also the maximum height of the specific heat). These data are 
listed in table \ref{t1}.

\begin{table}[h!]
\caption{Estimates of $x_{0_1}(L)$ and $H_{0_1}(L)$ for unknots}
\def\arraystretch{1.33}
\begin{ruledtabular}
\begin{tabular}{c  @{}*{1}{c} @{}*{1}{c}  }         
$L$  & $x_{0_1}(L)$ & $H_{0_1}(L)$ \cr   
\hline
  $7$ & $0.276 \pm 0.015$ & $1.5763$  \cr
  $9$ & $0.2553 \pm 0.0079$ & $2.0141$  \cr
  $11$ & $0.2441 \pm 0.0050$ & $2.4509$ \cr
  $13$ & $0.2371 \pm 0.0034$ & $2.8452$ \cr
  $15$ & $0.2325 \pm 0.0026$ & $3.2331$ \cr
\end{tabular}
\end{ruledtabular}
\label{t1}   
\end{table}

As noted below equation \Ref{e12} there are numerical evidence that
the metric exponent $\nu$ for a knotted polygon is equal to that of
the self-avoiding walk metric exponent.  Thus, we assume that, in equations
\Ref{e22} and \Ref{e23}, the crossover scaling is controlled by $1/\nu$,
where $\nu=0.587597(7)$ \cite{C10}.  Thus, assuming that $x_{0_1}(L)$ can
be extrapolated using a model
\begin{equation}
 x_{0_1}(L) = x_{0_1} + a/L^{1/\nu}, 
\label{e25}
\end{equation}
one may perform a linear fit to extract an estimate of the thermodynamic critical 
point $x_{0_1}$ in the model.  Since the estimates are independent, we proceed 
by bootstrapping the fits (by either adding or subtracting the error bar from 
each data point), and generating a sequence of $1000$ estimates.  The mean and 
variance of these estimates gave our final estimate of the critical point:
\begin{equation}
x_{0_1} = 0.2161 \pm 0.0069 .
\label{e26}
\end{equation}
Notice that the growth constant for all polygons (see equation \Ref{e1})
has best estimate $\mu = 4.684039931(27)$ \cite{C13} so that
\[ 1/\mu = 0.213490921(13) . \]  
This value is not excluded by our result, although it is known that 
$x_{0_1} \not= 1/\mu$ \cite{SW88,JvRW91}.  The 
best value for lattice polygons is \cite{C13} is
\[ 1/\mu = 0.2134909212 \pm 0.0000000013 . \]

Taking a derivative to $x$ of equation \Ref{e22} gives the finite size
scaling for the specific heat of a confined lattice knot of knot type $K$
\begin{align}
\mathcal{C}_{K,L}(x) \simeq 
\begin{cases}
& L^{q+1/\nu} \, h_K^\prime( (x{-}x_{K})\, L^{1/\nu}) ; \\
& (x-x_{K})^{1-q\nu}\, h_K^{\prime\prime}(  (x{-}x_{K})\, L^{1/\nu}) .
\end{cases}
\label{scalingK}
\end{align}
Generally, one expects a critical point $x_K$ (which may be function of
the knot type $K$), and scaling functions $h_K^\prime (z)$ and 
$h_K^{\prime\prime} (z)$.  In the event that $K=0_1$ the critical 
point is at $x_{0_1}$ and the scaling function is denoted by
$h_{0_1}(x)$.  This gives 
\begin{align}
\mathcal{C}_{0_1,L}(x) \simeq
L^{q+1/\nu} \, h_{0_1}^\prime( (x{-}x_{0_1})\, L^{1/\nu})  .
\label{e27}
\end{align}
It was argued after equation \Ref{e23} that $q\nu=\alpha_{0_1}^*-1$,
but the relationship with the entropic exponent of lattice polygons 
$\alpha$ (equations \Ref{e2} and \Ref{e5}) is unclear.  

The peak heights in the specific heat (figure \ref{fC01} and table \ref{t1}) 
are, by equation \Ref{e27}, proportional to $L^{q+1/\nu}$.  By taking ratios
of $H_{0_1}(L)$ in table \ref{t1}, one observes that $H_{L_1}/H_{L_2}
= (L_1/L_2)^{q+1/\nu}$.  Pairwise there are $10$ 
different ratios that can be constructed from the $H_{0_1}(L)$ in 
table \ref{t1}.  The minimum estimate obtained for $q$ is $-0.8089$, and 
the maximum estimate is $-0.7253$.  Determining the mean and the variance
of the estimates give the result
\begin{equation}
q = -0.765 \pm 0.035 , \qquad\hbox{(for the unknot $0_1$)}.
\label{e28}
\end{equation}

These estimates of the critical point $x_{0_1}$ (equation \Ref{e26}) and
$q$ (equation \Ref{e28}) makes possible the rescaling of the specific heat
curves by plotting $C_{0_1,L}(x)/L^{q+1/\nu}$ as a function
of $(x-x_{0_1})L^{1/\nu}$.  This is a test of the finite size scaling 
hypothesis in this model and this is displayed in figure \ref{fC01scaled}.

\begin{figure}[h!]	
\includegraphics[width=0.5\textwidth]{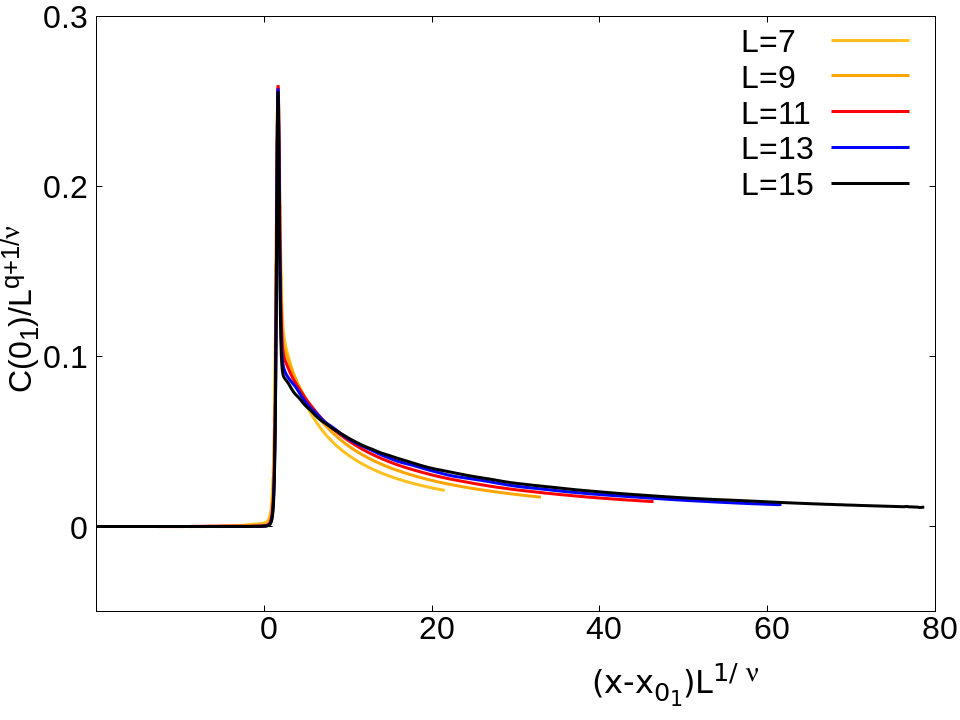}
	\caption{\textit{Rescaling the data in figure \ref{fC01} using the
	scaling hypothesis in equation \Ref{e27}.  The rescaling gives
	a narrower spike at the critical point, and as in figure \ref{fC01},
	the approach of the curve to the spike from above appears to 
	create a sharp turn at the critical point.}
	} 
\label{fC01scaled}
\end{figure}

In figure \ref{fC01scaled} the rescaled specific heat, according to equation
\Ref{e27}, is plotted. There appears to be a ``kink'' developing in 
this figure as the rescaled curves $\mathcal{C}_{0_1,L}/L^{q+1/\nu}$ 
curves approach $x_{0_1}$ from above (this is roughly at the point $(0,0.1)$
where the free energy curve joins the developing spike). This is already suggested by 
the specific heat data in figure \ref{fC01}.  The location of this \textit{junction}
rescales stably under vertical rescaling  by $L^q$ in figure \ref{fC01scaled} -- 
for different values of $L$, these data appear to converge to the same shape 
and location.  Thus, like the other features in figure \ref{fC01scaled} (for example, 
the height of the peak in the specific heat) it is subject to the same rescaling 
in the phase diagram.  

A schematic illustration of the rescaled data is shown 
in figure  \ref{Cscaled}, with two bullets, one labeled $t$ at the junction, and 
the other label $p$ at the peak of the specific heat curve.  These two points 
both are features in the data with rescaling controlled by the exponent $q$. 

\begin{figure}[h!]
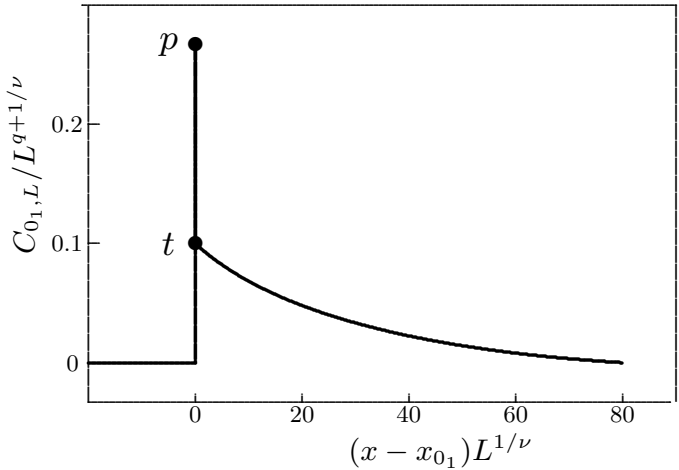

\beginpicture
\setcoordinatesystem units <2pt,1.5pt>
\setplotarea x from -10 to 100, y from -15 to 100

\put {\rotatebox{90}{\scalebox{1.25}{$C_{0_1,L}/L^{q+1/\nu}$}}} at -12 60 
\put {\scalebox{1.25}{$(x-x_{0_1})L^{1/\nu}$}} at 66 -12 

\plot 0 0 110 0 110 100 0 100 0 0 /

\multiput {\beginpicture \plot 0 0 0 2 / \endpicture} at 20 0 40 0 60 0 80 0 100 0  /
\put {$0$} at 20 -3 
\put {$20$} at 40 -3 
\put {$40$} at 60 -3 
\put {$60$} at 80 -3 
\put {$80$} at 100 -3

\multiput {\beginpicture \plot 0 0  2 0 / \endpicture} at 0 10 0 40 0 70  /
\put {$0$} at -3 10 
\put {$0.1$} at -4 40 
\put {$0.2$} at -4 70 

\setplotsymbol ({\scalebox{0.33}{$\bullet$}})
\plot 0 10 20 10 20 90 20 40 /
\setquadratic
\plot 20 40  50 20  100 10 /

\multiput {\scalebox{1.5}{$\bullet$}} at 20 40 20 90 /

\put {\scalebox{1.5}{$p$}} at 15 90
\put {\scalebox{1.5}{$t$}} at 15 40
\endpicture

\caption{\textit{A schematic diagram of the rescaled specific heat curves
denoting the general appearance of the curves in figure \ref{fC01scaled}.}
	} 
\label{Cscaled}
\end{figure}

Since $q=(\alpha_K^* - 1)/\nu$, the estimate for $q$ in equation \Ref{e28}
can be used to estimate the specific heat exponent for confined
unknotted polygons (see equations \Ref{e18}, \Ref{e19} and \Ref{e20}).
This gives
\begin{equation}
\alpha_{0_1}^* = 0.551 \pm 0.021 .
\label{e29}
\end{equation}

Finally, it was observed after equation \Ref{e10} that the entropic exponent
of lattice unknots $\alpha_{0_1} \approx \alpha \approx 0.237$ \cite{OTJW98,BO12}.  
Comparison with the estimate of $\alpha_{0_1}^*$ in equation \Ref{e29} shows 
that, numerically, $\alpha_{0_1}^*\not=\alpha_{0_1}$.  That is, the specific 
heat exponent $\alpha_{0_1}^*$ of \textit{confined lattice unknots} as it moves through
a \textit{collapse transition} from a solvent rich phase at small values of $x$ to
a polymer rich phase when $x>x_{0_1}$, is not equal to the entropic exponent 
$\alpha_{0_1}$ of lattice unknots. 

\begin{figure}[h!]	
\includegraphics[width=0.5\textwidth]{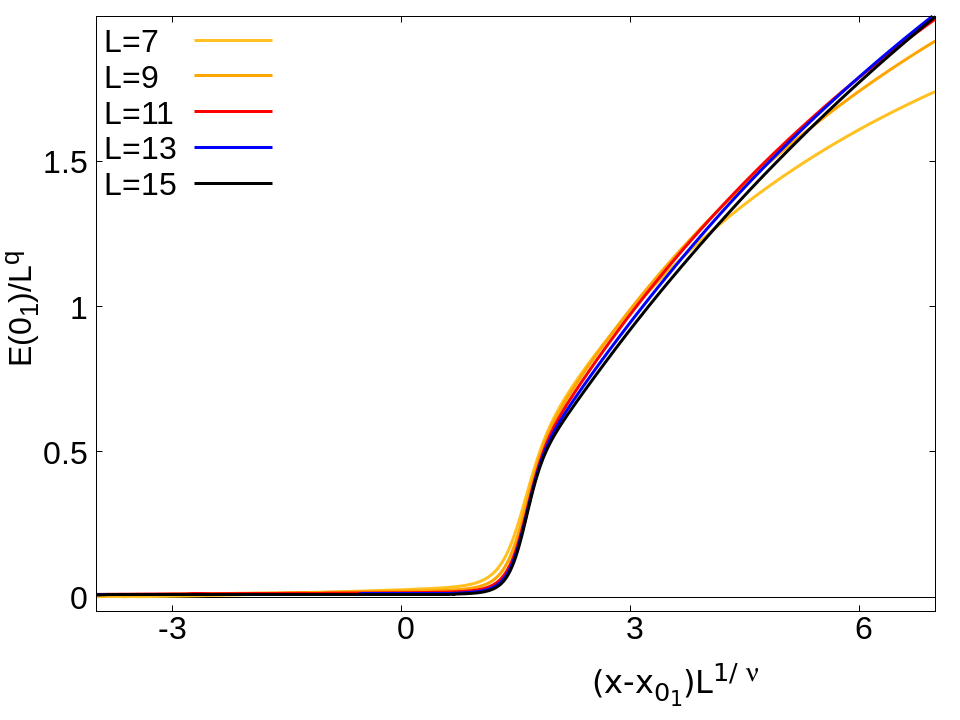}
	\caption{\textit{The rescaled energy density curves of the unknot.  The 
	curves coincide close to the critical point, forming a scaling region 
	for the model.}
	} 
\label{fE01scaled}
\end{figure}

The rescaled energy density curves, near the critical point $x_{0_1}$,
are finally plotted in figure \ref{fE01scaled}.  The curves for different values
coincide, exposing the scaling function $h_0$ (see equation \Ref{e22}).
By equation \Ref{e25}, $(x_{0_1}(L) - x_{0_1})\,L^{1/v} \approx a$,
where $a$ is a constant, and where $x_{0_1}(L)$ are finite size estimates
of the critical point $x_{0_1}$.   This is shown to good numerical accuracy 
in figure \ref{fE01scaled}.

\medskip

\subsubsection{The right-handed lattice trefoil $3_1^{+}$}

Numerical estimates of the finite size free energies $f_{3_1^{+},L}(x)$ 
(see equation \Ref{e14}) of the knot type $3_1^{+}$ are plotted in figure
\ref{fFE31}. These curves are similar to the data for unknots
in figure \ref{fFE01}, although the curves are somewhat more widely spaced
with changes in $L$.  The finite size energy densities for lattice trefoils, 
$\mathcal{E}_{3_1^{+},L}(x)$, are plotted in figure \ref{fE31}.  Comparison
to figure \ref{fE01} shows that the trefoil energy density deviates
from the corresponding curves for the unknot.

\begin{figure}[h!]	
\includegraphics[width=0.5\textwidth]{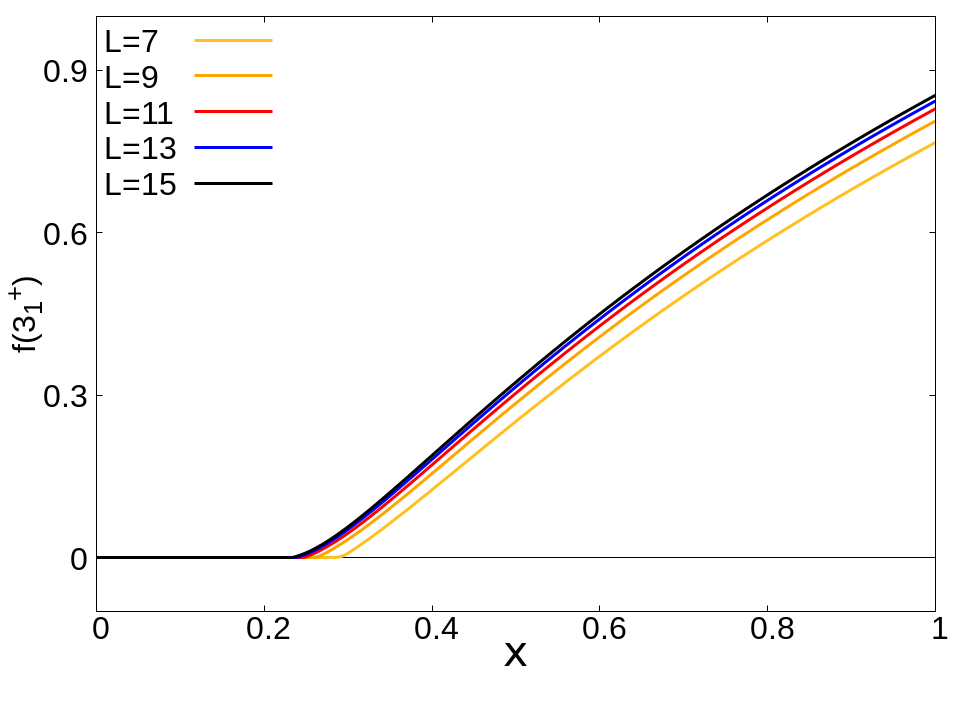}
	\caption{\textit{Finite size free energies $f_{3_1^{+},L}(x)$ (see equation
	\Ref{e14}) of the right handed trefoil for $x\in[0,1]$. The values of $L$ 
	increases from $L=7$ in steps of $2$ to $L=15$.  The curves 
	accumulate on zero when $x$ is small, but diverges once $x$ 
	is larger than a critical point $x_{3_1^{+}}$, consistent with 
	equation \Ref{e18}.}
	} 
\label{fFE31}
\end{figure}

\begin{figure}[h!]	
\includegraphics[width=0.5\textwidth]{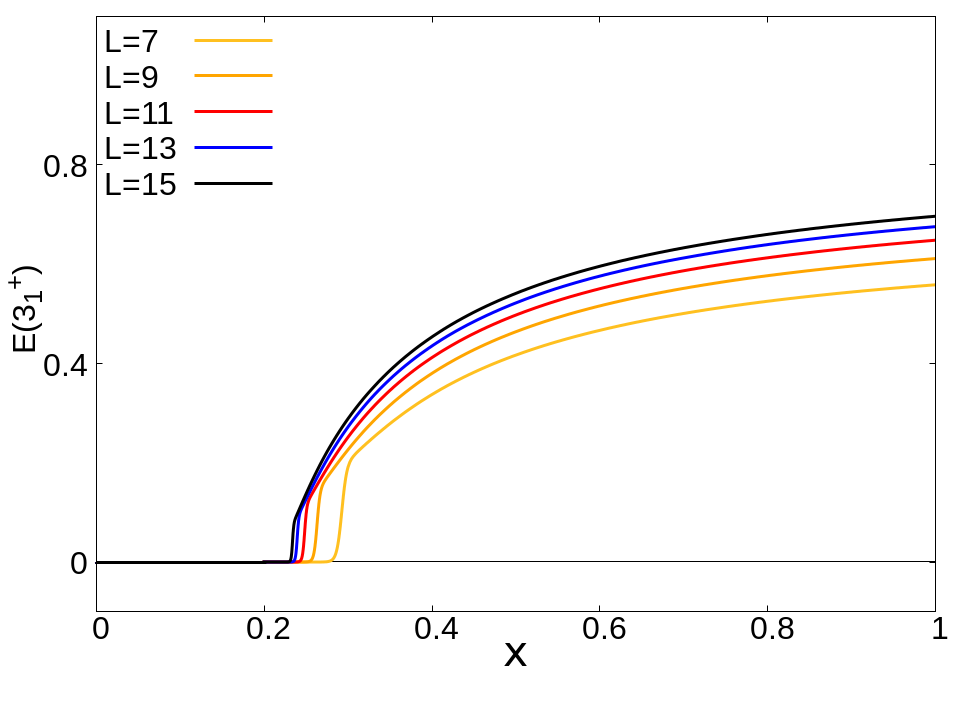}
	\caption{\textit{Finite size energy densities (see equation
	\Ref{e19}) as a function of $x$ for the right handed trefoil.  The curves
	accumulate on zero for $x$ small, but diverge sharply when 
	$x$ increases beyong the critical point.}
	} 
\label{fE31}
\end{figure}

\begin{figure}[h!]	
\includegraphics[width=0.5\textwidth]{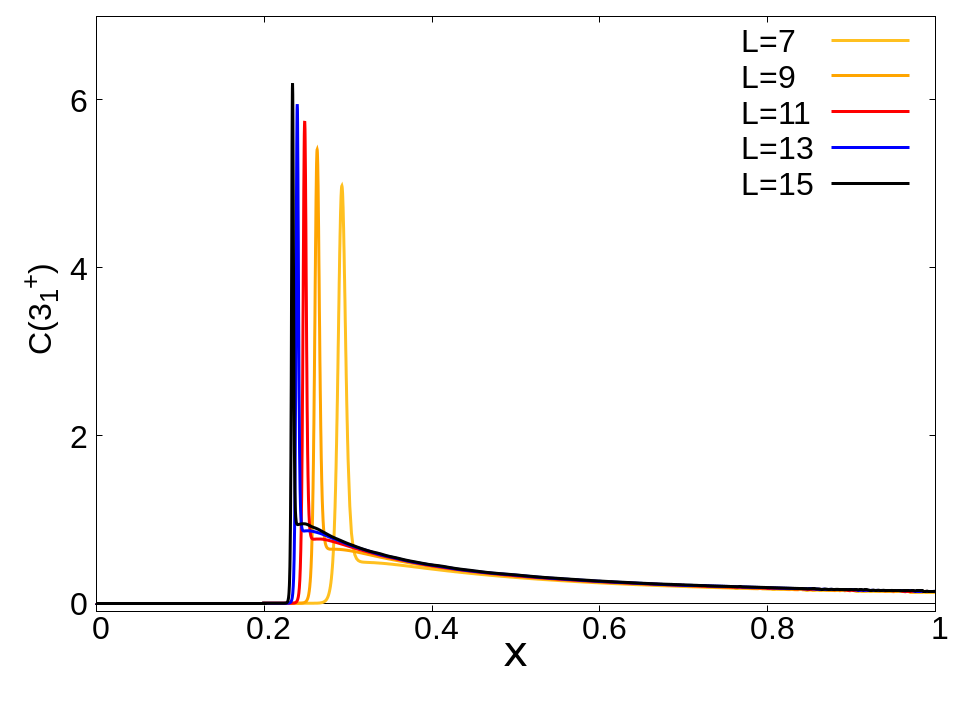}
	\caption{\textit{Finite size specific heat curves (see equation 
	\Ref{e20}) for the right handed trefoil.  For each value opf $L$ a sharp peak, 
	becoming more prominant as $L$ increases, develops in the vicinity 
	of the critical point.  The peaks (or ``spikes'') in the curves become
	narrower as $L$ increases, and increases in height with $L$.  Moreover, 
	with increasing $L$ the curves appear to converge to a limiting shape,
	and the approach of the curves to the spike from above develops a
	sharper turn when compared to the unknot in figure \ref{fC01}.}
	} 
\label{fC31}
\end{figure}

\begin{table}[h!]
\caption{Estimates of $x_{3_1}(L)$ and $H_{3_1}(L)$ for trefoils}
\def\arraystretch{1.33}
\begin{ruledtabular}
\begin{tabular}{c  @{}*{1}{c} @{}*{1}{c}  }         
$L$  & $x_{3_1}(L)$ & $H_{3_1}(L)$ \cr   
\hline
  $7$ & $0.2926 \pm 0.0055$ & $4.9760$  \cr
  $9$ & $0.2631 \pm 0.0035$ & $5.4079$  \cr
  $11$ & $0.2482 \pm 0.0025$ & $5.7470$ \cr
  $13$ & $0.2394 \pm 0.0018$ & $5.9504$ \cr
  $15$ & $0.2338 \pm 0.0015$ & $6.2013$ \cr
\end{tabular}
\end{ruledtabular}
\label{t2}   
\end{table}

In figure \ref{fC31} the specific heat curves are plotted for the trefoil.
these curves can be compared to the similar curves for the unknot in 
figure \ref{fC01}.  We note that there are again sharp peaks in the specific
heat, and in this case, narrower horizontally (and so more difficult to detect), 
and with significantly higher peaks.  Data on the locations of the peaks, and 
their heights, are listed in table \ref{t2}.

Plotting the estimates $x_{3_1}(L)$ as a function of $1/L^{1/\nu}$ shows
points that on a mild curve.  The critical point is extrapolated by using the
model
\begin{equation}
x_{3_1}(L) = x_{3_1} + a/L^{1/\nu} + b/L^{2/\nu} .
\label{e30}
\end{equation}
This gives
\begin{equation}
x_{3_1} = 0.2148 \pm 0.0069 .
\label{e31}
\end{equation}
This estimate is close to that of the unknot in equation \Ref{e26}.

Estimating the exponent $q$, using the same approach as before, now 
gives $q = -1.422 \pm 0.037$.  Plotting shows that this rescales the
heights of the peaks in the specific heat, but fails to rescale other
features in the graph appropriately.  This is shown in figure \ref{fC31scaled-alt}.
In particular, the specific heat curves for $x>x_{3_1}$ are systematically
separated in this plot.

\begin{figure}[h!]	
\includegraphics[width=0.5\textwidth]{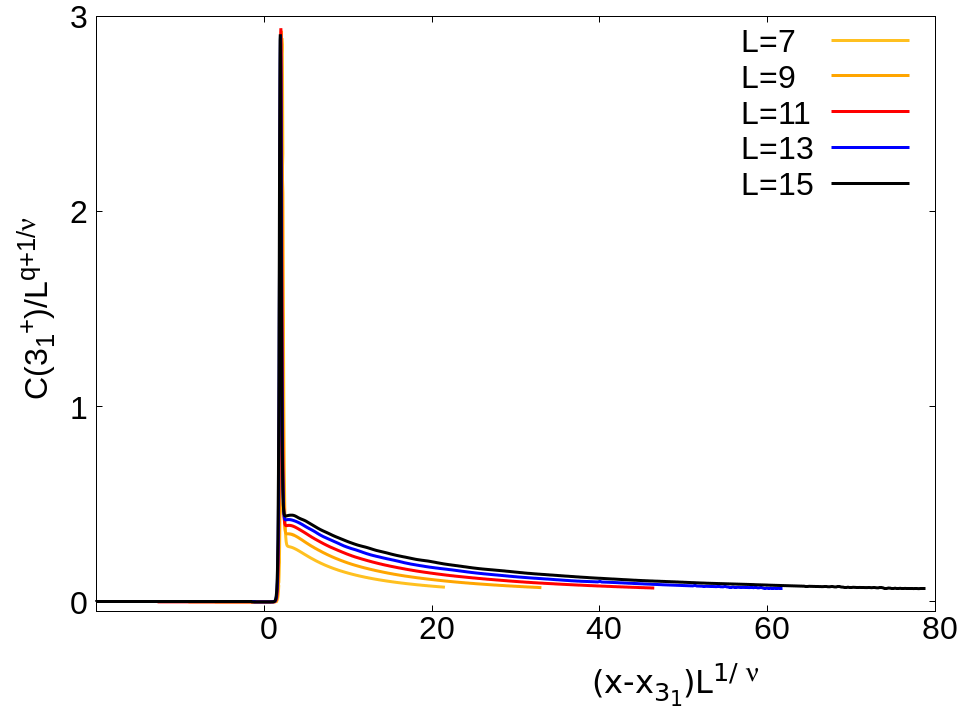}
	\caption{\textit{Rescaling of the specific heat curves of $3_1^{+}$
	using the value $q=-1.422$.  This value of $q$ clearly does not
	account for the scaling of the curves on the right hand side of the
	spike.}
	} 
\label{fC31scaled-alt}
\end{figure}

\begin{figure}[h!]	
\includegraphics[width=0.5\textwidth]{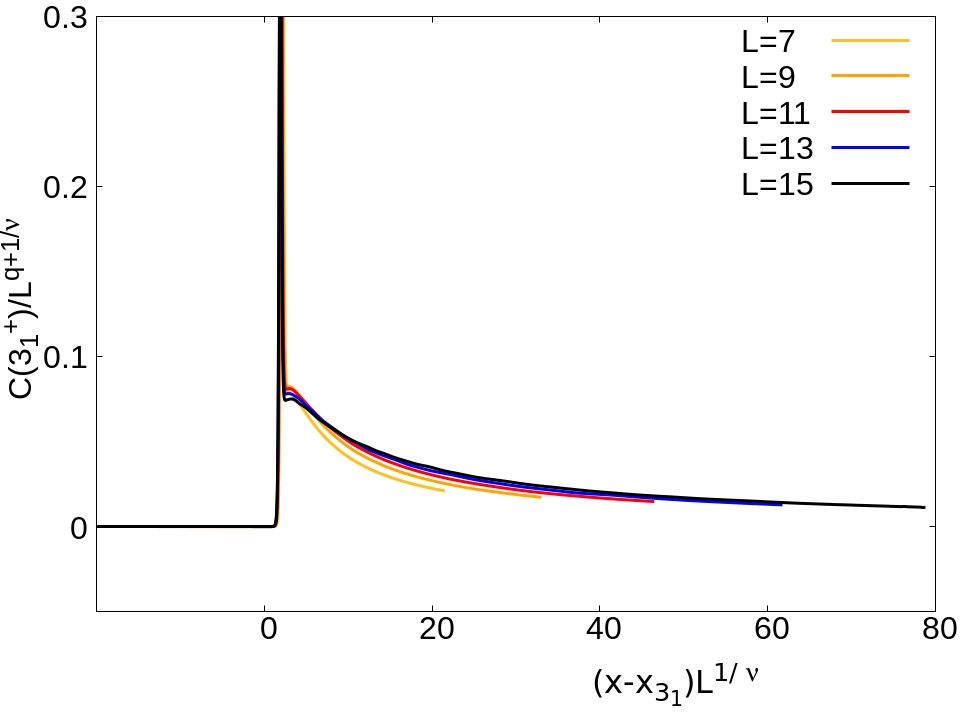}
	\caption{\textit{Rescaling of the specific heat curves of $3_1^{+}$ 
	using the unknot estimate $q=-0.765$ (equation \Ref{e28}).  This value
	appears to account for the scaling of the data.  In addition,
	a ``shoulder'' appears to develop as the curves approach the
	spike from above, a feature already visible in figures 
	\ref{fC31} and \ref{fC31scaled-alt}.}
	} 
\label{fC31scaled}
\end{figure}

An alternative possible value for the exponent $q$ is the unknot value
given in equation \Ref{e28}.  Rescaling the data for the trefoil using
this value gives the plot in figure \ref{fC31scaled}, where the unknot value
for $q$ rescales the specific heat curves above the critical point
appropriately.   This plot is numerical evidence that the exponent $q$, 
with its value determined from the unknot data, rescales the specific 
heat data determined for confined non-trivial lattice trefoils generally.
A consequence of this observation is that $\alpha^*_{3_1^+} 
= \alpha^*_{0_1}$.

In addition, a comparison of figure \ref{fC31scaled} with figure 
\ref{fC01scaled} (notice that the axes have the same scale) shows an 
apparent difference in the shape of the rescaled curves as they approach 
the critical point.  This is in addition to the (much) higher peaks in 
the specific heat for the trefoil, extending above the top boundary of 
the graph.  In figure \ref{Cscaled31} a schematic
drawing of the rescaled specific heat of confined lattice trefoils is shown.
As in figure \ref{Cscaled} the peaks are denoted by $p$, and the
junction point of the specific heat curves to the spike by $t$.  The rescaling
of the point $t$ in figure \ref{Cscaled31} is apparently controlled by
the exponent $q$ calculated from the data for unknotted lattice knots
(equation \Ref{e28}).  On the other hand, the height of the peaks scale with 
a different value, namely $q\approx -1.422$, as calculated directly from
the data in table \ref{t2}. In addition, the approach of the rescaled specific
heat from above, to its junction with the spike at $t$, as illustrated in 
figure \ref{Cscaled31}, shows additional changes when compared to the
case of the unknot, namely an inflection point and a concave, rather
than convex, shape on approach close the critical point.

The difference in the scaling of the peak heights in figures \ref{fC01} 
and \ref{fC31}) shows that the rescaling of the specific heat, as in
equation \Ref{scalingK}, is also a function of knot type (in that the
scaling function $h_K^\prime(z)$ is a function of $K$).  That is, it
appears that $h_{0_1}^\prime \not= h_{3_1^{+}}^\prime$, and
moreover, it may additionally be the case that the critical points
$x_{0_1}$ and $x_{3_1^{+}}$ may also be different, although this
cannot be ruled out, or shown, using our data.  One further notices
that above the critical points, but in the scaling region, as explained 
by figures \ref{Cscaled} and \ref{Cscaled31}, there are differences
in the shape of the rescaled specific heat curves of the unknot $0_1$
and right-handed trefoil $3_1^{+}$.  The developing spikes in the 
specific heat data of these knot types indicate a growing
non-analytic point in the free energy, with properties dependent on 
knot type, even if the scaling exponent $q$ turns out to be universal
to all knot types.

\begin{figure}[h!]
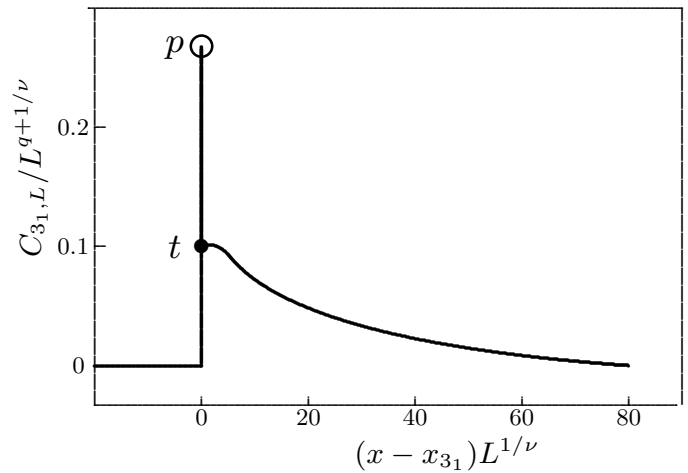

\beginpicture
\setcoordinatesystem units <2pt,1.5pt>
\setplotarea x from -10 to 100, y from -15 to 100

\put {\rotatebox{90}{\scalebox{1.25}{$C_{3_1,L}/L^{q+1/\nu}$}}} at -12 60 
\put {\scalebox{1.25}{$(x-x_{3_1})L^{1/\nu}$}} at 66 -12 

\plot 0 0 110 0 110 100 0 100 0 0 /

\multiput {\beginpicture \plot 0 0 0 2 / \endpicture} at 20 0 40 0 60 0 80 0 100 0  /
\put {$0$} at 20 -3 
\put {$20$} at 40 -3 
\put {$40$} at 60 -3 
\put {$60$} at 80 -3 
\put {$80$} at 100 -3

\multiput {\beginpicture \plot 0 0  2 0 / \endpicture} at 0 10 0 40 0 70  /
\put {$0$} at -3 10 
\put {$0.1$} at -4 40 
\put {$0.2$} at -4 70 

\setplotsymbol ({\scalebox{0.33}{$\bullet$}})
\plot 0 10 20 10 20 90 20 40 /
\setquadratic
\plot 25 38  50 20  100 10 /
\plot 20 40 22.5 40.2 25 38 /

\multiput {\scalebox{1.5}{$\bullet$}} at 20 40 /
\multiput {\scalebox{2.5}{$\circ$}} at 20 90 /

\put {\scalebox{1.5}{$p$}} at 15 90
\put {\scalebox{1.5}{$t$}} at 15 40
\endpicture

\caption{\textit{A schematic diagram emphasizing the features of the
	specific heat curves of the right handed trefoil in 
	figure \ref{Cscaled31}.  We notice that the peak heights (denoted ``$p$'')
	do not rescale consistent with the value of $q$ in equation \Ref{e28},
	but that the point denoted by $t$, at the junction of the curves
	with the spike, rescales consistent with the value in equation 
	\Ref{e28}.  In addition, a ``shoulder'' appears to develop near the 
	point $t$ in the curves incident with the spike.}
	} 
\label{Cscaled31}
\end{figure}

Rescaling the energy density similar to figure \ref{fE01scaled} gives
a similar outcome, but with somewhat larger corrections to scaling.
This is illustrated in figure \ref{fE31scaled}.  Notice that close to the
critical point, as the curves climb towards the turning point, they coincide,
but beyond the turning point they start to split apart as they exit the scaling
region near the critical point.

\begin{figure}[h!]	
\includegraphics[width=0.5\textwidth]{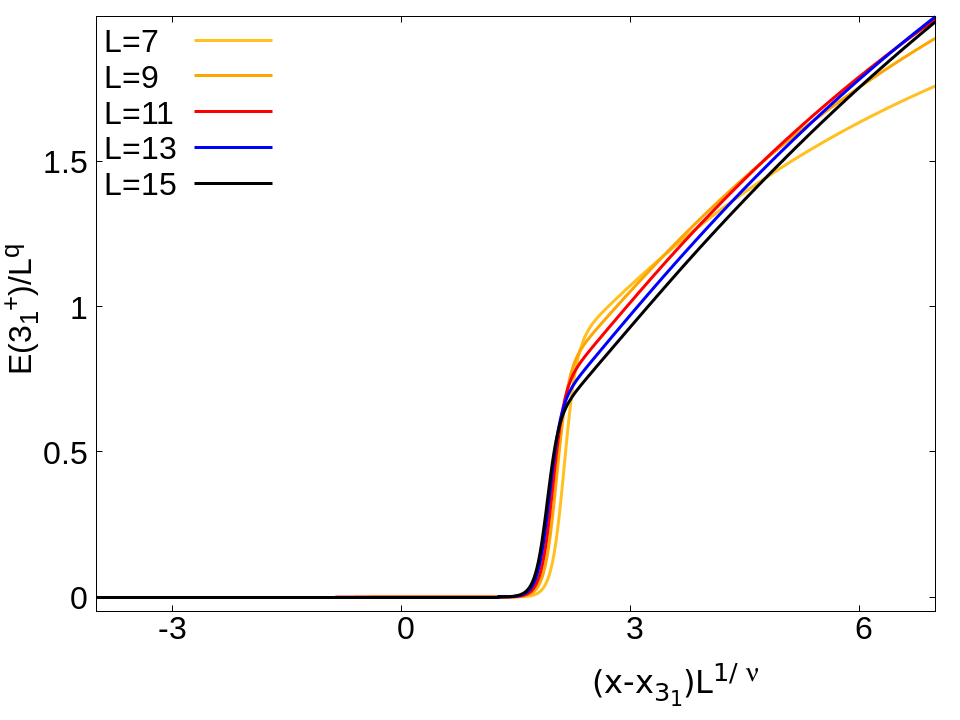}
	\caption{\textit{Rescaling the energy density curves (see equation \Ref{e22})
	of the right handed trefoil $3_1^{+}$.  Here, the value of $q$ is
	given by equation \Ref{e28}.  Compariton to the energy density 
	curves of the unknot (figure \ref{fE01scaled}) shows that, above the
	critical point, a sharper turn develops in the curves for the right handed
	trefoil knots.}
	} 
\label{fE31scaled}
\end{figure}

\subsubsection{The granny and square knots}

Data generated for the granny knot ($K=3_1^{+}\# 3_1^{+}$) and the
square knot ($K=3_1^{+}\#3_1^{-}$) were similarly collected and
analysed.  These are two compound knot types.  The granny knot
is a chiral knot (the right handed version is $3_1^{+}\#3_1^{+}$),
while the square knot is amphichireal.  The location and height of
the peaks in the specific heat of these knot types are listed in 
table \ref{t3}, and the rescaled specific heat curves are plotted in 
figure \ref{fC3131pppmscaled}, using $q=-0.765$ determined from the 
unknot data (equation \Ref{e28}).  The curves show features similar to
that seen for $K=3_1^{+}$ in figure \ref{fC31scaled}.  Comparison 
of these two knot types shows very similar data.

\begin{figure}[h!]	
\includegraphics[width=0.5\textwidth]{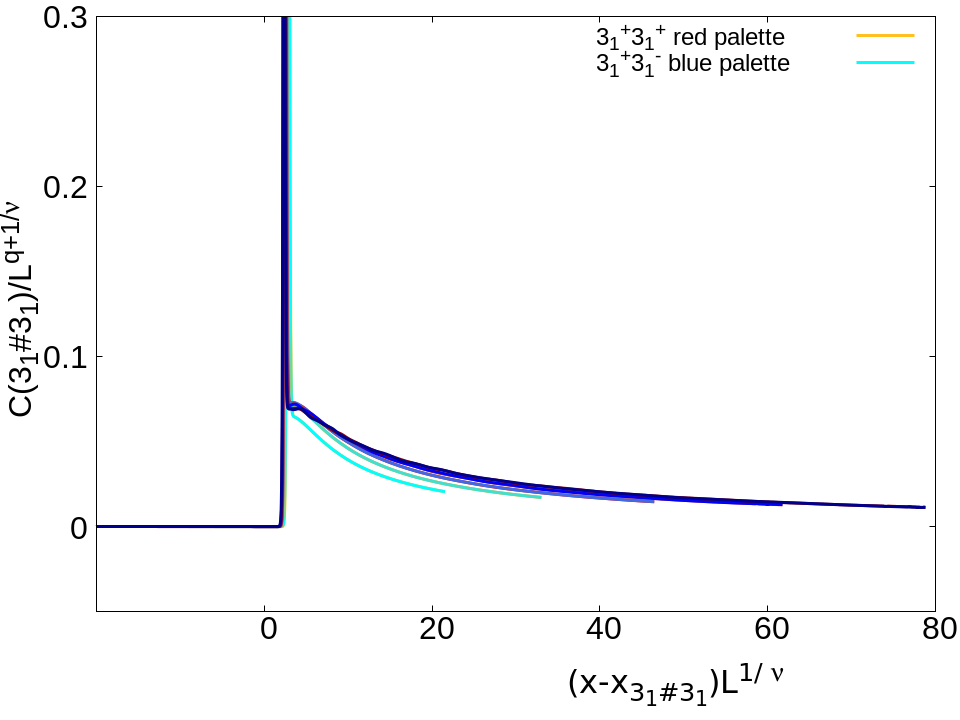}
	\caption{\textit{Rescaled specific heat curves for the granny
	$3_1^{+}\#3_1^{+}$ and square knot $3_1^{+}\#3_1^{-}$.  As 
	in figure \ref{fC31scaled}, a shoulder is developing at the junction
	of the curves with the spike, which is more prominant when compared
	to the spikes seen for the unknot and trefoil in figures \ref{fC01scaled} 
	and \ref{fC31scaled}.  The data for $3_1^{+}\#3_1^{+}$ are plotted
	in the blue, and for $3_1^{+}\#3_1^{-}$ in red.  Since the
	data are so similar for these knot types, only the blue curves are seen,
	laying on top of the red curves.}
	} 
\label{fC3131pppmscaled}
\end{figure}

\begin{table}[h!]
\caption{Estimates of $x_K$ and $H_K$ for $3_1^{+}\#3_1^{+}$ and $3_1^{+}\#3_1^{-}$}
\def\arraystretch{1.33}
\begin{ruledtabular}
\begin{tabular}{@{}*{1}{c} | @{}*{1}{c} @{}*{1}{c} |  @{}*{1}{c} @{}*{1}{c}  }         
$L$  & $x_{3_1^{+}\#3_1^{+}}(L)$ & $H_{3_1^{+}\#3_1^{+}}(L)$ 
        & $x_{3_1^{+}\#3_1^{-}}(L)$ & $H_{3_1^{+}\#3_1^{-}}(L)$ \cr   
\hline
  $7$ & $0.3189 \pm 0.0043$ & $9.0232$ & $0.3176 \pm 0.0043$ & $8.9088$  \cr
  $9$ & $0.2774 \pm 0.0026$ & $9.9477$ & $0.2765 \pm 0.0026$ & $9.7852$  \cr
  $11$ & $0.2574 \pm 0.0018$ & $10.5902$ & $0.2567 \pm 0.0018$ & $10.3658$ \cr
  $13$ & $0.2460 \pm 0.0014$ & $11.0789$ & $0.2455 \pm 0.0014$ & $10.8330$ \cr
  $15$ & $0.2387 \pm 0.0011$ & $11.4350$ & $0.2382 \pm 0.0011$ & $11.1553$  \cr
\end{tabular}
\end{ruledtabular}
\label{t3}   
\end{table}

Using a model similar to equation \Ref{e30} to extrapolate the critical
point in each of these cases we obtain the estimates
\begin{eqnarray}
x_{3_1^{+}\#3_1^{+}} &=& 0.2152 \pm 0.0052; \\
x_{3_1^{+}\#3_1^{-}} &=& 0.2149 \pm 0.0052.
\end{eqnarray}
These values are very close to the estimates of $x_{0_1}$ and $x_{3_1^{+}}$
(the critical points for the unknot and the trefoil) -- see equations \Ref{e26}
and \Ref{e31}.  For finite values of $L$, however, the  estimates are well 
separated.  For example, in figure \ref{fC313131scaled} the rescaled 
specific heat curves for the knot types 
$\{0_1,3_1^{+},3_1^{+}\#3_1^{+},3_1^{+}\#3_1^{-}\}$
are plotted in a region close to the peaks for $L=15$.  In these plots
the horizontal direction ($x$-axis) is magnified (compare the scales with that in 
figure \ref{fC3131pppmscaled}).  Observe that the spikes (now more properly 
called peaks)  for the compound knot types $3_1^{+}\#3_1^{+}$ 
and $3_1^{+}\#3_1^{-}$ 
are virtually identical on this scale, but they are also well separated from the
data for the trefoil $3_1^{+}$ and the unknot $0_1$.  Using the width of the peaks
as an estimate of the uncertainty in the location of the critical points, show that
the critical points of the square and granny knot types here are well outside the
confidence intervals of the the prime knot types $3_1^{+}$ and $0_1$.
However, when extrapolating the critical points to the thermodynamic limit,
one obtains critical values which are well within the stated confidence intervals
of each other.

\begin{figure}[h!]	
\includegraphics[width=0.5\textwidth]{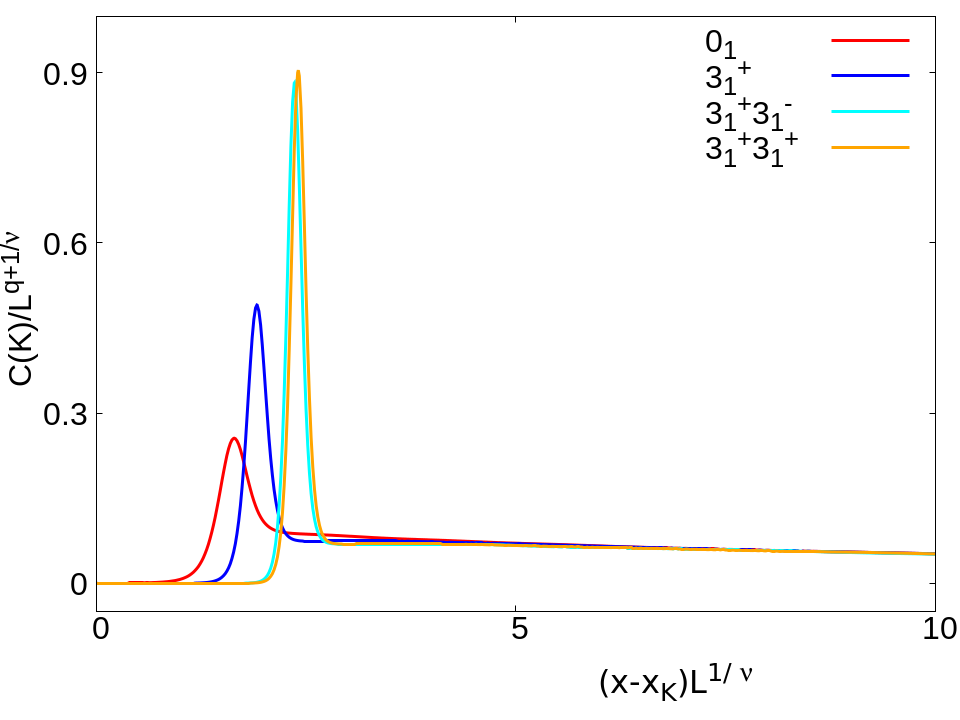}
	\caption{\textit{Rescaled specific heat curves for $0_1$ (red), $3_1^{+}$
	(blue), $3_1^{+}\#3_1^{+}$ (orange) and $3_1^{+}\#3_1^{-}$ (cyan).
	In this figure horizontal direction is greatly magnified (compare the scale to
	that of figure \ref{fC3131pppmscaled}), and the spikes seen in other
	graphs now appears as peaks in the data.  On this scale, the peaks in
	$3_1^{+}\#3_1^{+}$ and $3_1^{+}\#3_1^{-}$ are virtually identical 
	and well separated from the data for $0_1$ and $3_1^{+}$.  Notice that
	the peaks become more prominent (higher), and narrower, as the knot types
	varies from $0_1$, $3_1^{+}$, and then to the compound knot types.}
	} 
\label{fC313131scaled}
\end{figure}

\subsubsection{Other prime knots}

In addition similar data were obtained for prime knot types up to 
six crossings.  These are the knot types 
$\{4_1,5_1^{+},5_2^{+},6_1^{+},6_2^{+},6_3\}$.
The estimates of the critical point and peak heights for the figure
eight knot ($4_1$) are listed in table \ref{t4}.  Extrapolating
the critical point using the model in equation \Ref{e30} gives
\begin{equation}
x_{4_1} = 0.2148 \pm 0.0061 .
\end{equation}

\begin{table}[h!]
\caption{Estimates of $x_{4_1}(L)$ and $H_{4_1}(L)$}
\def\arraystretch{1.33}
\begin{ruledtabular}
\begin{tabular}{c  @{}*{1}{c} @{}*{1}{c}  }         
$L$  & $x_{4_1}(L)$ & $H_{4_1}(L)$ \cr   
\hline
  $7$ & $0.3000 \pm 0.0049$ & $6.2863$  \cr
  $9$ & $0.2671 \pm 0.0031$ & $6.8413$  \cr
  $11$ & $0.2507 \pm 0.0022$ & $7.2202$ \cr
  $13$ & $0.2415 \pm 0.0016$ & $7.6844$ \cr
  $15$ & $0.2351 \pm 0.0013$ & $7.7725$ \cr
\end{tabular}
\end{ruledtabular}
\label{t4}   
\end{table}

By equation \Ref{e27} one expects that  
\begin{equation}
H_K(L) = \mathcal{C}_{K,L}(x_K) \simeq L^{q+1/\nu} h_K^\prime(0) .
\end{equation}
Thus, the value of the scaling function at the critical point,
$h_K^\prime (0)$ (see equation \Ref{e27}) can be defined by the
limit 
\begin{equation}
\lim_{L\to\infty} \frac{H_K(L)}{L^{q+1/\nu}} = h_K^\prime (0) .
\end{equation}
For finite values of $L$ one expects that
\begin{equation}
\frac{H_K(L)}{L^{q+1/\nu}} = h_K^\prime (0) + 
\frac{a}{L^{q+1/\nu}} 
+ \cdots .
\label{38}
\end{equation}
In table \ref{t5} the heights of the peaks in $C_{K,L}(x_K(L)) = H_K(L)$
are given, collecting the data from tables \ref{t1}, \ref{t2}, \ref{t3} and \ref{t4},
and the adding data for the other prime knots considered.

\begin{table}[h!]
\caption{Estimates of $H_{K}(L)$}
\def\arraystretch{1.33}
\begin{ruledtabular}
\begin{tabular}{l  @{}*{5}{c} }
  & \multicolumn{5}{c}{$L$} \cr         
   \cline{2-6} 
$K$  & $7$ & $9$ & $11$ & $13$ & $15$ \cr   
\hline
  $0_1$ & $1.5763$ & $2.0141$ & $2.4509$ & $2.8452$ & $3.2331$  \cr
  $3_1^{+}$ & $4.9760$ & $5.4079$ & $5.7470$ & $5.9594$ & $6.2013$  \cr
  $4_1$ & $6.2863$ & $6.8413$ & $7.2202$ & $7.6844$ & $7.7725$ \cr 
  $5_1^{+}$ & $7.5451$ & $8.2829$ & $8.8108$ & $9.1892$ & $9.5177$ \cr 
  $5_2^{+}$ & $7.6131$ & $8.3224$ & $8.8107$ & $9.2183$ & $9.5450$ \cr 
  $6_1^{+}$ & $8.8670$ & $9.7451$ & $10.3373$ & $10.8385$ & $11.2580$ \cr 
  $6_2^{+}$ & $8.9765$ & $9.8726$ & $10.4842$ & $10.9422$ & $11.3070 $ \cr 
  $6_3$ & $9.0197$ & $9.9182$ & $10.4933$ & $10.9905$ & $11.3676$ \cr 
  $3_1^{+}\#3_1^{+}$ & $9.0232$ & $9.9477$ & $10.5902$ & $11.0789$ & $11.4350$ \cr 
  $3_1^{+}\#3_1^{-}$ & $8.9088$ & $9.7852$ & $10.3658$ & $10.8330$ & $11.1553$ \cr 
\end{tabular}
\end{ruledtabular}
\label{t5}   
\end{table}

Ignoring higher order terms in equation \Ref{38} and estimating
$h_K^\prime (0)$ using linear fits give the extrapolated values
in the last column of table \ref{t6}. The value for the unknot is
$h_{0_1}^\prime(0) \approx 0.259$,  and this is larger than the
estimate for the trefoil $3_1^{+}$.  Increasing knot complexity
down the table generally increases, with some variations, the
value of $h_K^\prime (0)$.  Also observe that the estimates for
$5$-crossing and $6$-crossing knots are close in value to each 
other.

\begin{table}[h!]
\caption{Extrapolated estimates of $x_{K}$}
\def\arraystretch{1.33}
\begin{ruledtabular}
\begin{tabular}{l  @{}*{2}{c}  }         
$K$  & $x_{K}$ & $h_K^\prime(0)$ \cr   
\hline
  $0_1$ & $0.2161 \pm 0.0069$ & $0.259$ \cr
  $3_1^{+}$ & $0.2148 \pm 0.0069$ & $0.197$ \cr
  $4_1$ & $0.2148 \pm 0.0061$ & $0.253$ \cr
  $5_1^{+}$ & $0.2146 \pm 0.0056$ & $0.322$ \cr
  $5_2^{+}$ & $0.2148 \pm 0.0056$ & $0.314$ \cr
  $6_1^{+}$ & $0.2154 \pm 0.0052$ & $0.387$ \cr
  $6_2^{+}$ & $0.2150 \pm 0.0052$ & $0.382$ \cr
  $6_3$ & $0.2152 \pm 0.0052$ & $0.383$ \cr
  $3_1^{+}\#3_1^{+}$ & $0.2152 \pm 0.0052$ & $0.398$ \cr 
  $3_1^{+}\#3_1^{-}$ & $0.2149 \pm 0.0052$ & $0.371$ \cr
\end{tabular}
\end{ruledtabular}
\label{t6}   
\end{table}

Additionally, the estimated critical points are listed
in the middle column of table \ref{t6}.  These values
are very close to each other.  Assuming that all these
knot types have the same critical point, one can calculate
their average, which is $0.2151 \pm 0.0014$.
This estimate is slightly smaller that the average over
the knot types in table \ref{t6}.  This is to be expected,
since the connective constant for unknots are known
to be strictly less than that of all polygons \cite{SW88,P89}.  For
non-trivial knot types it is conjectured that the connective
constant is equal to that of the unknot (and it is known
that it is equal or larger than that of the unknot).

\section{Discussion}

Our data, as represented by the graphs of the free energy, the 
finite size energy density, and the specific heat data in the last section, 
indicate sharp transitions at a critical point $x_K$ for each of the
knot types examined.  This transition separates a \textit{solvent phase}
(when $x<x_K$ and illustrated schematically in figure \ref{1}) 
dominated by solvent, from a \textit{polymer phase} (as schematically
illustrated in figure \ref{2}) where the cube has a positive density of 
polymer (as, for example, seen in figure \ref{fE01}).  
Data analysis indicates that knot type may affect the transition 
behavior. This is most apparent in the varying heights of specific 
heat spikes, and more subtly in the shape of the specific heat curves 
as the system approaches the critical point in the polymer phase.

With the above in mind, we now turn our attention to the question 
raised by the caption of figure \ref{2}.  That is, is the knot captured
in a small knotted arc in the polymer phase, or does it relax (loosens)
or ``dissolve'' in this phase?  To examine this question, consider the 
ratio of energy densities $\mathcal{E}_{K_1,L} (x)/\mathcal{E}_{K_2,L} (x)$. 
By equation
\Ref{e22},
\begin{equation}
\frac{\mathcal{E}_{K_1,L} (x)}{\mathcal{E}_{K_2,L} (x)}
\sim \frac{h_1 ((x{-}x_{K_1})\,L^{1/\nu}) }{h_2 ((x{-}x_{K_2})\,L^{1/\nu}) }
\end{equation}
assuming that the exponent $q$ is not a function of knot type 
(as seen in the previous section), and where $h_j$ is the scaling
function associated with knot type $K_j$ with a critical point $x_{K_j}$. 

Consider first the case $K_2 = 0_1$ (the unknot) and $K_1 = 3_1^+$
(the trefoil).  In the solvent phase $x < \min\{x_{0_1},x_{3_1}\}$ and the
length $n$ of lattice knots are small.  Our data shows that $p_{n,L}(0_1) 
\gg p_{n,L}(3_1)$ in this regime with the result that
$(\mathcal{E}_{3_1^+,L} (x)/ \mathcal{E}_{0_1,L} (x)) \approx 0$.
In the polymer phase (when $x>\max\{x_{0_1},x_{3_1}\}$) one may
examine the asymptotic scaling for large $L$, suggested by equation 
\Ref{e19}.  This shows that, if $\alpha^*_{0_1}=\alpha^*_{3_1^+}$, 
the ratio $(\mathcal{E}_{3_1^+,L} (x) / \mathcal{E}_{0_1,L} (x))$
should be constant, independent on $x$.  On the other hand, if
$\alpha^*_{0_1}\not=\alpha^*_{3_1^+}$, then the ratio should be 
a function of $x$.

In figure \ref{fR3101} we plot 
$(\mathcal{E}_{3_1^+,L} (x) / \mathcal{E}_{0_1,L} (x))$
as a function of $x$.  In the solvent phase the ratio is,  as expected, 
approximately zero, but it increases sharply through the critical
point(s), forming a minor spike, before settling down close to
$1$ in the polymer phase.  This suggests that
\begin{equation}
\frac{\mathcal{E}_{3_1,L} (x)}{\mathcal{E}_{0_1,L} (x)}
\simeq 
\begin{cases}
0, & \text{if $x< \min\{x_{0_1},x_{3_1}\}$ (solvent)}; \\
1, & \text{if $x> \max\{x_{0_1},x_{3_1}\}$ (polymer)}. \\
\end{cases}
\end{equation}
In the vicinity of the critical point the minor spike shows that the
ratio exceeds $1$, and that trefoils are dominant over the unknot
in this region.  This is consistent with the trefoil being localized in
this regime, in line with the expectation of equation \Ref{e11}.
For larger values of $x$ the ratio settles down on $1$.
This indicates that the exponent $\alpha^*_K$ may be 
independent of knot type.  This is consistent with the knot transitioning 
from a tightly packed ball-pair structure in the solvent phase to become 
a looser topological constraint on the ring in the polymer phase, where 
it no longer affects thermodynamic properties.  In other words, if the 
solvent phase is schematically represented in the left panel of 
figure \ref{1},  then the polymer phase is more appropriately represented 
by the right panel of Figure \ref{2}. These arguments are also consistent
with the observation earlier, that $\alpha_{3_1^+}^* = \alpha_{0_1}^*$
in equation \Ref{e19}, since the same value of $q$ (see equation
\Ref{e24}) rescales both the unknot and trefoil knot data.

\begin{figure}[h!]	
\includegraphics[width=0.5\textwidth]{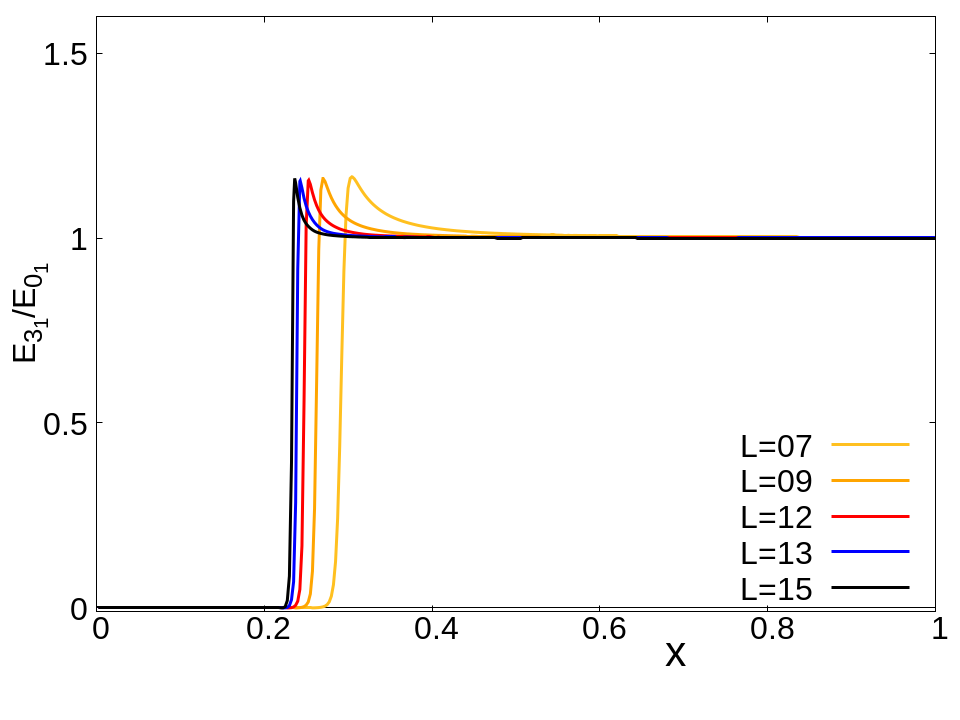}
	\caption{\textit{The energy density ratio of trefoils ($3_1$) to 
	unknots ($0_1$) 	for confining boxes of sizes $\{7,9,11,13,15\}$.  With 
	increasing $L$ the curves appear to accumulate on a step-function 
	with a minor spike.  At small $x$ the ratio is approximately zero,
	and with increasing $x$ passes through a step and spike to
	be approximately equal to $1$.  This behaviour is consistent with 
	the relaxation of melting of the knot from a tight conformation
	in the solvent phase, to a looser topological constraint on
	the conformations of the knot in the polymer phase.}
	} 
\label{fR3101}
\end{figure}

Similar results are found for other knot types.  In figure \ref{fR4101}
the ratio of the figure eight knot ($4_1$) and the unknot is plotted.
The general observations for this case are similar to those seen
in figure \ref{fR3101} -- a sharp transition from very small values,
through a spike, and the data levels off close to $1$ as $x$ increases.
The ratio for the figure eight knot to the trefoil is similarly plotted
in figure \ref{fR4131}.  In this case the foot of the step is somewhat
rounded compared to figures \ref{fR3101} and \ref{fR4101}, and the
spike is nearly non-existent at the top of the step.  The graphs then
accumulate on $1$ with increasing $L$ in the polymer phase.

\begin{figure}[h!]	
\includegraphics[width=0.5\textwidth]{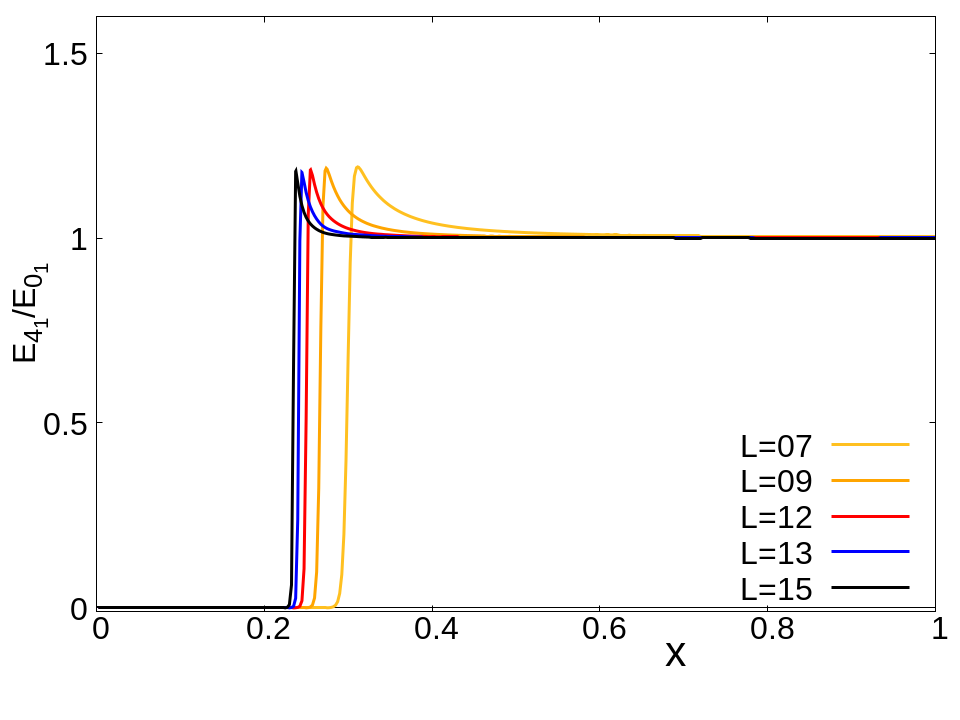}
	\caption{\textit{The energy density ratio of figure eights ($4_1$)
	to unknots ($0_1$) for confining boxes of sizes $\{7,9,11,13,15\}$.  
	The general features in this graph is very similar to that seen 
	in figure \ref{fR3101}.}
	} 
\label{fR4101}
\end{figure}

As a last example, we plotted a compound knot type (the granny
knot $3_1^+\# 3_1^+$) over the unknot in figure \ref{fR31p31p01}.
In this plot the features seen in figure \ref{fR3101} are recovered,
and even though there are prime factors in the granny knot, the
data again accumulate on $1$ in the polymer phase with increasing
size $L$ of the confining box.

\begin{figure}[h!]	
\includegraphics[width=0.5\textwidth]{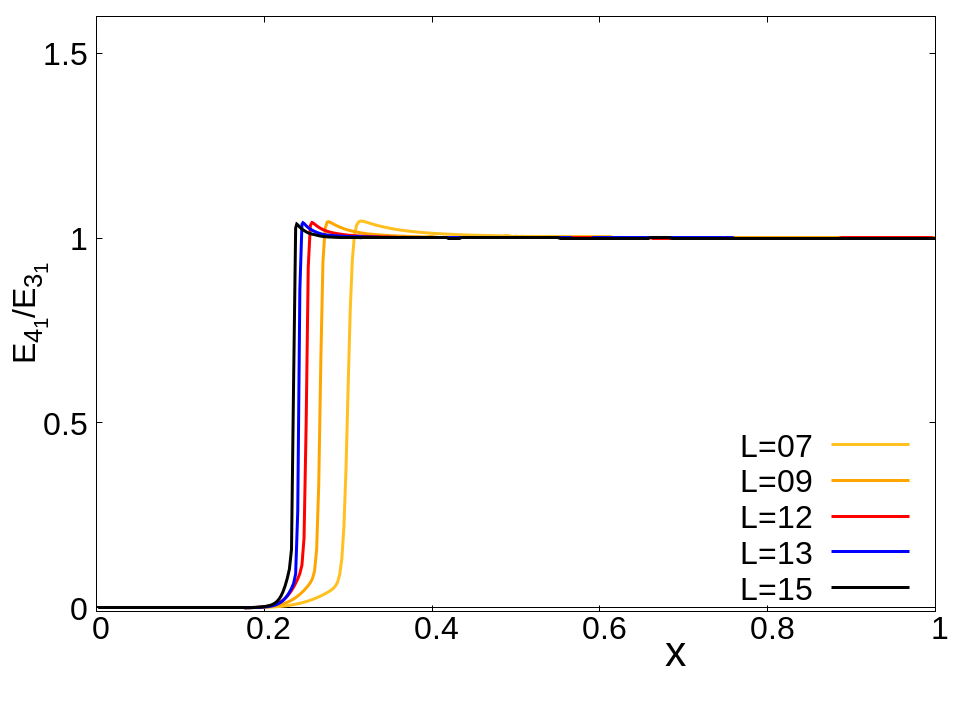}
	\caption{\textit{The energy density ratio of figure eights ($4_1$)
	to trefoils ($3_1$) for confining boxes of sizes $\{7,9,11,13,15\}$.  
	The general features in this graph is very similar to that seen 
	in figure \ref{fR3101}.  However, notice the slight rounding
	of the curves at the foot of the step, and the significant reduction
	in the height of the spike seen in figures \ref{fR3101} and 
	\ref{fR4101}.}
	} 
\label{fR4131}
\end{figure}

\begin{figure}[h!]	
\includegraphics[width=0.5\textwidth]{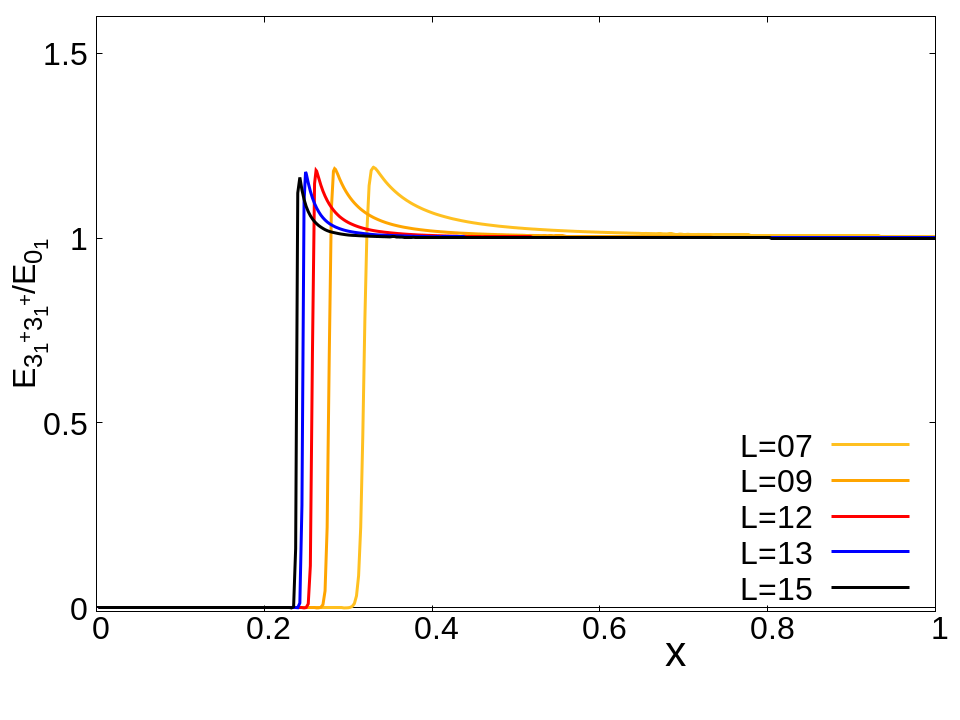}
	\caption{\textit{The energy density ratio of the rings of compound
	knot type ($3_1^{+}\#3_1^{+}$) 	to unknots ($0_1$) for confining boxes 
	of sizes $\{7,9,11,13,15\}$. The general features in this graph is very 
	similar to that seen in figure \ref{fR3101}.}
	} 
\label{fR31p31p01}
\end{figure}

Finally we briefly examine our results for the free energy in terms
of Flory-Huggins theory. The finite size free energy density 
$f_{K,L}(x)$ is defined in equation \Ref{e14}.  The mean concentration 
of polymer in the cube, $\phi_{K,L}(x) = \mathcal{E}_{K,L}(x)$, 
is approximately given by equation \Ref{e21}.  We notice that the free energy 
\textit{per unit length polymer} is given by
\begin{equation}
\psi_{K,L}(x) = f_{K,L}(x)/\phi_{K,L}(x) .
\label{eqn41}
\end{equation}
Flory-Huggins theory \cite{Flory42,Huggins42,F69} gives a 
mean field expression for the \textit{excess} free energy $F$ by
\begin{equation}
F (\phi) = (1/V) \log \phi + (1-\phi)\log(1-\phi) - \chi \phi^2
\end{equation}
where $\chi$ is the Flory Interaction Parameter and $V=L^3$ is the
volume of the confining cube.  In the case of lattice self-avoiding
walks and lattice knots this was examined in 
references \cite{GJvR18,JvR19,JvR19a}.

One may apply Flory-Huggins theory to our results by noting that the
excess free energy is given by $\psi_e(x) = \psi_{K,L}(x) - \log \mu$ (see 
equation \Ref{eqn41}), where $\log \mu$ is the background conformational free 
energy.  This excess free energy can be plotted against $\phi_{K,L}(x)$, which 
is the mean concentration or density of polymer (equation \Ref{e21}). This should 
collapse the excess free energy $\psi_e(x)$ to a single underlying curve for all 
the values of $L$ we considered.  This is shown in figure \ref{floryhuggins}
for the unknot free energies. Similar graphs can be made for 
the other knot types.  We notice, as seen in references \cite{GJvR18,JvR19,JvR19a},
a very clear separation into a solvent phase for small concentration.
and a polymer phase for large concentration.

\begin{figure}[h!]	
\includegraphics[width=0.5\textwidth]{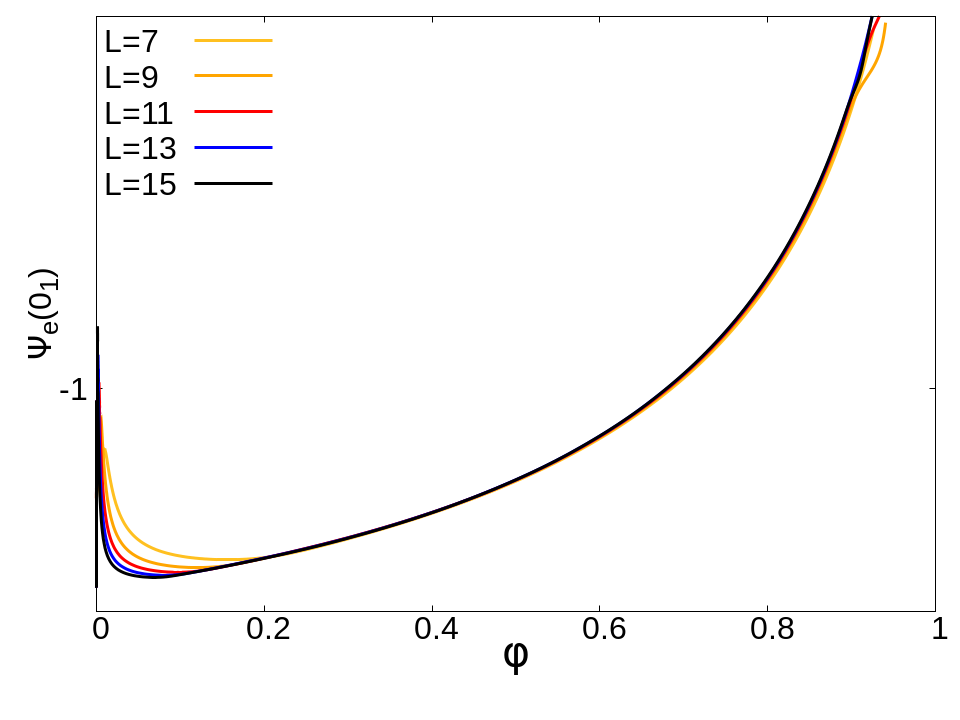}
	\caption{\textit{The excess free energy $\psi_e(x) = \psi_{K,L}(x) - \log \mu$
	plotted as a function of the mean concentration $\phi_{K,L}(x)$.  
	According to Flory-Huggins theory these curves should coincide with
	the mean field free excess free energy.  In this graph the curves collapse
	for all the values of $L$ considered to an underlying curve, except near 
	small concentrations where finite size effects are more pronounced.}
	} 
\label{floryhuggins}
\end{figure}

%

-----------------------

\section*{Acknowledgements} EJJvR acknowledges financial support 
from NSERC (Canada) in the form of Discovery Grant RGPIN-2019-06303. 
EJJvR was also grateful to the Department of Physics and Astronomy at
the University of Padova for financial support during a visit in 2019. 
Data generated for this research are available at Zenodo \cite{JvRTO26data}.

\bibliographystyle{unsrt}
\bibliography{knotbox}

\end{document}